\DeclarePairedDelimiter \norm{\lVert}{\rVert}%
\newtheorem{theorem}{Theorem}
\newtheorem{lemma}{Lemma}
\theoremstyle{remark}
\newtheorem*{remark}{Remark}
\begin{document}

\allowdisplaybreaks[4] 

\title{Strong Coordination over Noisy Channels:\\ Is Separation Sufficient?}

\author{ \thanks{This work is supported by NSF grants CCF-1440014, CCF-1439465.}
\IEEEauthorblockN{Sarah A. Obead, J\"{o}rg Kliewer}
\IEEEauthorblockA{Department of Electrical and Computer Engineering\\
New Jersey Institute of Technology\\
Newark, New Jersey 07102\\
Email: sao23@njit.edu, jkliewer@njit.edu}
\and
\IEEEauthorblockN{Badri N. Vellambi}
\IEEEauthorblockA{Research School of Computer Science\\
	Australian National University\\
	Acton, Australia 2601\\
	Email: badri.n.vellambi@ieee.org}
}

\maketitle

\begin{abstract}
	
We study the problem of strong coordination of actions of two agents 
$X$ and $Y$ that communicate over a noisy communication channel such that 
the actions follow a given joint probability distribution. We propose 
two novel schemes for this noisy strong coordination problem, and derive 
inner bounds for the underlying strong coordination capacity region. 
The first scheme is a joint coordination-channel coding scheme that 
utilizes the randomness provided by the communication channel to reduce
the local randomness required in generating the action sequence at agent $Y$.
The second scheme exploits separate coordination and channel coding where local 
randomness is extracted from the channel after decoding. Finally, 
we present an example in which the joint scheme
is able to outperform the separate scheme in terms of coordination rate.

\end{abstract}

\section{Introduction}

The problem of communication-based coordination of multi-agent systems
arises in numerous applications including mobile robotic networks, smart
traffic control, and distributed computing such as distributed games and
grid computing \cite{cuff2010coordination}. Several theoretical and applied studies on
multi-agent coordination have targeted  questions on  how agents
exchange information and how their actions can be correlated to achieve a
desired overall behavior. Two types of coordination have been addressed in
the literature -- \emph{empirical} coordination where the histogram of induced
actions is required to be close to a prescribed target distribution, and
\emph{strong} coordination, where the induced sequence of joint actions of all the
agents is required to be statistically close (i.e., nearly indistinguishable) from a chosen target
probability mass function (pmf).

Recently, the capacity regions of
several empirical and strong coordination network problems have been
established \cite{soljanin2002compressing,cuff2010coordination,
	Cuff13, SCo3TRN:2014, SCMLN:2015, bereyhi2013empirical}. Bounds
for the capacity region for the point-to-point case  were obtained in
\cite{gohari2011generating} under the assumption that the nodes communicate
in a bidirectional fashion in order to achieve coordination. A similar
framework was adopted and improved in \cite{yassaee2015channel}. In
\cite{SCo3TRN:2014,haddadpour2012coordination,bereyhi2013empirical}, the
authors addressed inner and outer bounds for the capacity region of a
three-terminal network in the presence of a relay. The work of
\cite{SCo3TRN:2014} was later extended in \cite{bloch2013strong,SCMLN:2015} to
derive a  precise characterization of the strong coordination region for
multi-hop networks. Starkly, the majority of the recent works on coordination have considered noise-free communication channels with the exception of two works: joint empirical coordination of the channel inputs/outputs of a noisy communication channel with source and reproduction sequences is considered in \cite{CS11}, and in
\cite{yassaee2016channelWithchannel}, the notion of strong coordination is
used to simulate a discrete memoryless channel via another channel.

In this work, we consider the point-to-point coordination setup illustrated in Fig.~\ref{fig:P2PCoordination}, where in
contrast to \cite{CS11} only source and reproduction sequences at two
different nodes ($X$ and $Y$) are coordinated by means of a suitable communication scheme over a discrete memoryless channel (DMC).

Specifically, we propose two different novel achievable coding schemes for
this noisy coordination scenario, and derive inner bounds to the underlying strong
capacity region. The first scheme is a joint coordination channel coding
scheme that utilizes randomness provided by the DMC to reduce the local
randomness required in generating the action sequence at Node $Y$ (see
Fig.~\ref{fig:P2PCoordination}). The second scheme exploits separate
coordination and channel coding where local randomness is extracted from the
channel after decoding. Even though the proposed joint scheme is related to the
scheme in \cite{yassaee2016channelWithchannel}, the presented scheme
exhibits a significantly different codebook construction adapted to our coordination
framework. Our scheme requires the quantification of the amount of common randomness shared by the two nodes as well as the local randomness at each of the two nodes. This is a feature that is absent from the analysis in  \cite{yassaee2016channelWithchannel}.  
Lastly, when the noisy channel and the correlation between $X$ to $Y$ are both given by binary symmetric channels (BSCs), we study
the effect of the capacity of the noisy channel on
the sum rate of common and local randomness. We conclude this work by showing that the joint scheme outperforms the
separate scheme in terms of the coordination rate in the high-capacity regime.

The remainder of the paper is organized as follows: Section~\ref{sec:notation} sets the notation. The problem of strong coordination over a noisy communication link is presented in Section~\ref{sec:problemdef}. We then derive achievability results for the noisy point-to-point coordination in Section~\ref{sec:JointScheme} for the joint scheme and in Section~\ref{sec:SepCoorRE} for the separate scheme, respectively. In Section~\ref{sec:example}, we present numerical results for both schemes when the target joint distribution is described as a doubly binary symmetric source and the noisy channel is given by a BSC.

\section{Notation}\label{sec:notation}
Throughout the paper, we denote a discrete random variable with upper-case
letters (e.g.,~$X$) and its realization with lower case letters
(e.g.,~$x$), respectively. The alphabet size of the random variable $X$ is denoted as
$|\mathcal{X}|$. We use $X^n$ to denote the finite sequence
{$[X_1,X_2,\dots,X_n]$}. 
The binary entropy function is denoted as $h_2(\cdot)$, the indicator function by $\mathds{1}(w)$, and the
counting function as
$N(\omega|w^n)=\sum_{i=1}^{n}\mathds{1}(w_i=\omega)$. 
$\mathbb P[A]$ is the
probability that the event $A$ occurs. 
The pmf of the discrete random variable $X$ is denoted as
$P_X(x)$. However, we sometime use the lower case notation (e.g.,~$p_X(x)$)
to distinguish target pmfs or alternative definitions. We let
$\mathbb{D}(P_X(x)||Q_X(x))$ denote the Kullback-Leibler divergence between
two distributions $P_X(x)$ and $Q_X(x)$ defined over an alphabet
$\cal{X}$. ${\cal T}_\epsilon^n(P_{X})$ denotes the set of
$\epsilon$-strongly letter-typical sequences of length $n$.
Finally, $P^{\otimes  n}_{X_1X_2\dots X_k}$ denotes the joint pmf of $n$
i.i.d.~random variables $X_1,X_2,\dots, X_k$.

\section{Problem Definition} \label{sec:problemdef}

\begin{figure}[t!]
	\centering
	{\includegraphics[scale=0.575]{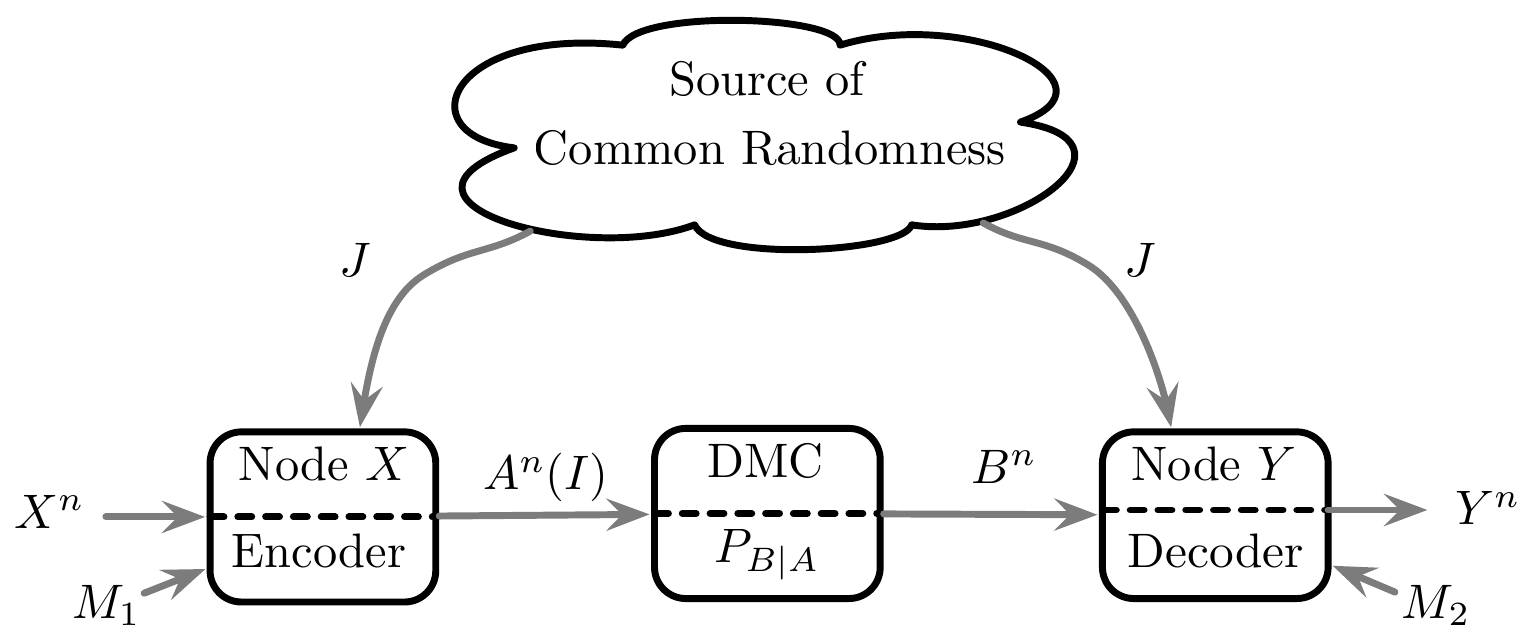}}
	\caption{Point-to-point strong coordination over a DMC.}
	\label{fig:P2PCoordination}
\end{figure}

The point-to-point coordination setup we consider in this work is depicted
in Fig.~\ref{fig:P2PCoordination}. Node $X$ receives a sequence of actions
$X^n \in \mathcal{X}^n$ specified by nature where $X^n$ is i.i.d.~according
to a pmf $p_X$. Both nodes have access to shared randomness $J$ at rate
$R_o$ bits/action from a common source, and each node possesses local
randomness $M_k$ at rate $\rho_k$, $k=1,2$. Thus, in designing a \emph{block} scheme to coordinate $n$ actions of the nodes, we assume $J\in \{1,\ldots, 2^{nR_o}\}$, and $M_k\in \{1,\ldots, 2^{n\rho_k}\}$, $k=1,2$, and we wish to communicate a
codeword $A^n(I)$ over the rate-limited DMC $P_{B|A}(b|a)$ to Node $Y$,
where $I$ denotes the (appropriately selected) coordination message. The \emph{codeword} $A^n(I)$ is
constructed based on the input action sequence $X^n$, the local randomness
$M_1$ at Node $X$, and the common randomness $J$. Node $Y$ generates a
sequence of actions $Y^n\in \mathcal{Y}^n$ based on the received codeword
$B^n$, common randomness $J$, and local randomness $M_2$. We assume that the common randomness is independent
of the action specified at Node $X$. A tuple $(R_o, \rho_1,\rho_2)$ is deemed \emph{achievable} if for each $\epsilon>0$, there exist $n\in\mathbb{N}$ and a (strong coordination) coding scheme such that the joint
pmf of actions $\hat{P}_{X^n,Y^n}$ induced by this scheme and the $n$-fold product\footnote{This is the joint pmf of $n$ i.i.d. copies of $(X,Y)\sim p_{XY}$.} of the desired joint pmf  
$P^{\otimes n}_{XY}$ are \emph{close} in total variation, i.e.,
\begin{equation}\label{eq:StrngCoorCondtion}
\norm{\hat{P}_{X^nY^n}-P^{\otimes n}_{XY}}_{{\scriptscriptstyle TV}} 
<\epsilon.
\end{equation}

We now present the two achievable coordination schemes.

\section{Joint Coordination Channel Coding}\label{sec:JointScheme}
This scheme follows an approach similar to those in \cite{cuff2010coordination,bloch2013strong,SCo3TRN:2014,SCMLN:2015} where
coordination codes are designed based on allied channel resolvability
problems~\cite{han93:_approx}. The structure of the allied problem pertinent to the coordination problem at hand is given in
Fig.~\ref{fig:StrongCoordinationAllied}. 
The aim of the allied problem is to generate $n$ symbols for two correlated sources
$X^n$ and $Y^n$ whose joint statistics is close to $P^{\otimes n}_{XY}$ as defined by \eqref{eq:StrngCoorCondtion}. To do so, we employ three independent
and uniformly distributed messages $I$, $K$, and $J$ and two codebooks $\mathscr{A}$
and $\mathscr{C}$ as shown in Fig.~\ref{fig:StrongCoordinationAllied}. To define the two
codebooks, consider auxiliary random variables $A\in\mathcal{A}$ and $C\in\mathcal{C}$ jointly correlated with $(X,Y)$ as  $P_{XYABC}=P_{AC}P_{X|AC}P_{B|A}P_{Y|BC}$.

From this factorization it can be seen that the scheme consists of two 
\emph{reverse test} channels  $P_{X|AC}$ and $P_{Y|AC}$ used to generate the sources from the
codebooks.
In particular, 
$P_{Y|AC}=P_{B|A}P_{Y|BC}$, i.e., the randomness of the DMC contributes to the
randomized generation of $Y^n$.

Generating $X^n$ and $Y^n$ from $I$, $K$, $J$ represents a complex channel
resolvability problem with the following ingredients:

\begin{itemize}
	\item Nested codebooks: Codebook $\mathscr{C}$ of size $2^{n(R_o+R_c)}$ is generated
		i.i.d.~according to pmf $P_{C}$, i.e.,
		$C^n_{ij}\sim P_{C}^{\otimes n}$ for all $(i,j) \in \cal{I}\times \cal{J}$.	Codebook $\mathscr{A}$ is generated by randomly selecting $A^n_{ijk}\sim  P_{A|C}^{\otimes n}(\cdot|{C_{ij}^n})$ for all $(i,j,k) \in \cal{I}\times\cal{J}\times\cal{K}$. 
	\item Encoding functions:\\
		$C^n\!: \{1,2,\dots,2^{nR_c}\}\!\times\! \{1,2,\dots,2^{nR_o}\}\! \rightarrow \mathcal{C}^n$,\\
		$A^n\!: \{1,\dots,2^{nR_c}\}\!\times\! \{1,\dots,2^{nR_o}\}\!\times\! \{1,\dots,2^{nR_a}\}\! \rightarrow \mathcal{A}^n$.
	\item Indices: $I,J,K$ are independent and uniformly distributed over $\{1,\dots,2^{nR_c}\}$,
		$\{1,\dots,2^{nR_o}\}$, and $\{1,\dots,2^{nR_a}\}$, respectively. 
		These indices select the pair of codewords $C^n_{IJ}$ and $A^n_{IJK}$ from codebooks $\mathscr{C}$ and $\mathscr{A}$.
	\item The selected codewords $C^n_{IJ}$ and $A^n_{IJK}$ are then
		passed through DMC $ P_{X|AC}$ at Node
		$X$, while at Node $Y$, codeword $A^n_{IJK}$ is sent
		through DMC $P_{B|A}$ whose output $B^n$ is used to decode codeword $C^n_{\hat{I}J}$ and both are then passed
		through DMC $P_{Y|BC}$ to obtain $Y^n$. 
\end{itemize} 

Since the codewords are randomly chosen, the induced joint~pmf of the generated actions and codeword indices in the allied problem is itself a random variable and depends on the random codebook. Given a realization of the codebooks 
\begin{align}
\mathsf C \triangleq (\mathscr{A},\mathscr{C})=\left\{ a_{ijk}^n, c_{ij}^n: \substack{i\in\{1,\ldots, 2^{nR_c}\}\\j\in\{1,\ldots, 2^{nR_o}\}\\ k\in\{1,\ldots, 2^{nR_a}\}}\right\}, \label{eqn-codedefn}
\end{align}
 the code-induced joint~pmf of the actions and codeword indices in the allied problem is given by
\begin{multline}\label{eqn-codeinducedAct}
\hspace{-2mm}\mathring{P}_{X^nY^nIJK}(x^n,y^n,i,j,k) \triangleq \frac{P_{X|AC}^{\otimes n}(x^n|a_{ijk}^n,c^n_{ij})}{2^{n(R_c+R_o+R_a)}}\\
\hspace{-1.5mm}\times  \Big(\sum_{b^n,\hat i}  P_{B|A}^{\otimes n}(b^n|a_{ijk}^n) \mathsf P_{\hat I | B^nJ}(\hat i|b^n\hspace{-0.75mm},j) P_{Y|BC}^{\otimes n}(y^n|b^n,c^n_{\hat ij}) \Big),
\end{multline}	

\noindent where $\mathsf P_{\hat I|B^nJ}$ denotes the pmf induced by the operation of decoding the index $I$ using the common randomness and the channel output at Node $Y$. Note that by denoting the decoding operation as a pmf, we can even incorporate randomized decoders. Note also that the indices for the $C$-codeword that generate $X$ and $Y$ sequences  in \eqref{eqn-codeinducedAct} can be different since the decoding of the index $I$ at Node $Y$ may fail. We are done if we accomplish the following tasks: (1) identify conditions on $R_o, R_c,R_a$ under which the code-induced pmf $\mathring P_{X^nY^n}$ is \emph{close} to the design pmf $P_{XY}^{\otimes n}$ in the total variation sense; and (2) devise a 
strong coordination scheme by inverting the operation at  Node $X$. This will be done in following sections by subdividing the analysis of the allied problem.

\begin{figure}[t!]
	\centering
	{\includegraphics[scale=0.625]{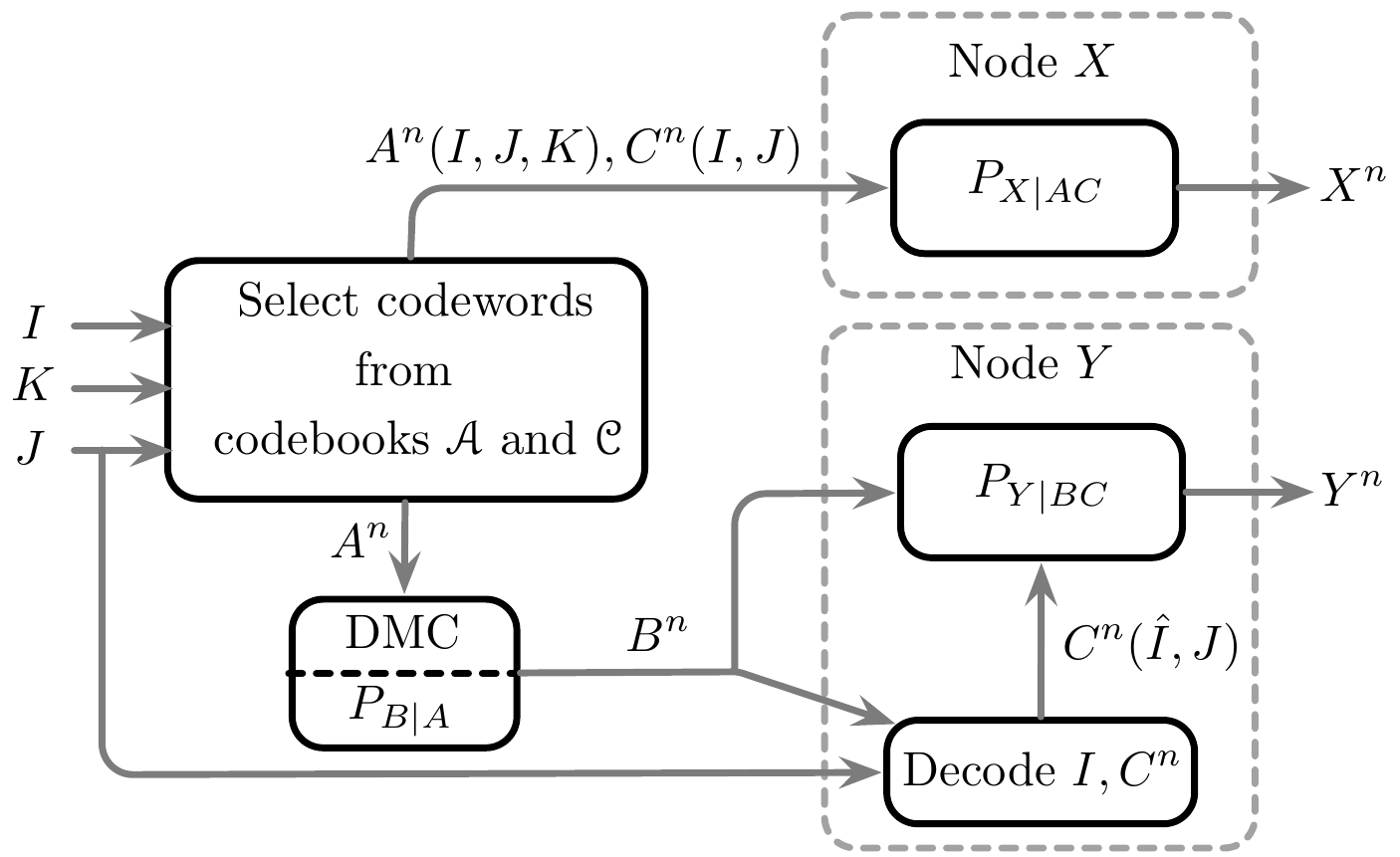}}
	\vspace{-1ex}
	\caption{A joint scheme for the allied problem.} 
	\vspace{-2ex}
	\label{fig:StrongCoordinationAllied}
\end{figure}

\subsection{Resolvability constraints} \label{sec:ChResCons}
Assuming that the decoding of $I$ and the codeword $C^n_{IJ}$ occurs perfectly at Node $Y$, we see that the code-induced joint pmf induced by the scheme for the allied problem for a given realization of the codebook $\mathsf C$ in \eqref{eqn-codedefn} is
\begin{multline}\label{eqn-codeinduced}
\check{P}_{X^nY^nIJK}(x^n,y^n,i,j,k)=\frac{P_{X|AC}^{\otimes n}(x^n|a_{ijk}^n,c^n_{ij})}{2^{n(R_c+R_o+R_a)}}\\
\times \Big( \sum_{b^n} P_{B|A}^{\otimes n}(b^n|a_{ijk}^n)  P_{Y|BC}^{\otimes n}(y^n|b^n,c^n_{ij}) \Big).
\end{multline}

The following result quantifies when the induced distribution in \eqref{eqn-codeinduced} is close to the $n$-fold product of the design pmf $P_{XY}$.  

\begin{lemma}[Resolvability constraints]
	\label{Lma:Reslv}
	The total variation between the code-induced pmf $\check{P}_{X^nY^n}$ in \eqref{eqn-codeinduced} and the desired pmf $P^{\otimes n}_{XY}$ asymptotically vanishes, i.e., 
	${\mathbb E_\mathsf{C}}\big[ \norm*{\check{P}_{X^nY^n}-P^{\otimes n}_{XY}}_{{\scriptscriptstyle TV}}\big] \rightarrow 0$ as $n\rightarrow \infty$, if 
	\begin{align}
	R_a+R_o+R_c& > I(XY;AC),\label{eqn-resolve1}\\
	 R_o+R_c &> I(XY;C).\label{eqn-resolve2}
	\end{align}
	 Note that here $\mathbb E_\mathsf{C}$ denotes the expectation over the random realization of the codebooks.
\end{lemma}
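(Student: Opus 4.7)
My plan is to establish Lemma~\ref{Lma:Reslv} via a hierarchical two-layer application of the soft covering (channel resolvability) lemma, one layer for the outer codebook $\mathscr{C}$ and one for the inner superposition codebook $\mathscr{A}$, combined through the triangle inequality. The opening observation is that the factorization $P_{XYABC}=P_{AC}P_{X|AC}P_{B|A}P_{Y|BC}$ makes $X$ and $Y$ conditionally independent given $(A,C)$, since $\sum_{b^n}P_{B|A}^{\otimes n}(b^n|a^n_{ijk})P_{Y|BC}^{\otimes n}(y^n|b^n,c^n_{ij})=P_{Y|AC}^{\otimes n}(y^n|a^n_{ijk},c^n_{ij})$, so \eqref{eqn-codeinduced} simplifies to
\begin{equation*}
\check{P}_{X^nY^n}(x^n,y^n)=\frac{1}{2^{n(R_o+R_c+R_a)}}\sum_{i,j,k}P_{XY|AC}^{\otimes n}(x^n,y^n|a^n_{ijk},c^n_{ij}),
\end{equation*}
reducing the problem to showing that this random measure concentrates on $P_{XY}^{\otimes n}$ in expected total variation.

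Next I would introduce the intermediate random measure
\begin{equation*}
\tilde{P}_{X^nY^n}(x^n,y^n)\triangleq \frac{1}{2^{n(R_o+R_c)}}\sum_{i,j}P_{XY|C}^{\otimes n}(x^n,y^n|c^n_{ij})
\end{equation*}
and apply the triangle inequality $\norm*{\check{P}_{X^nY^n}-P^{\otimes n}_{XY}}_{\scriptscriptstyle TV}\leq\norm*{\check{P}_{X^nY^n}-\tilde{P}_{X^nY^n}}_{\scriptscriptstyle TV}+\norm*{\tilde{P}_{X^nY^n}-P^{\otimes n}_{XY}}_{\scriptscriptstyle TV}$. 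The second term is precisely the error of a single-layer resolvability code driving the reverse test channel $P_{XY|C}$ with $2^{n(R_o+R_c)}$ i.i.d.~$P_C^{\otimes n}$ codewords, so the classical soft covering lemma of \cite{han93:_approx,cuff2010coordination} gives vanishing expected TV whenever $R_o+R_c>I(XY;C)$, which is exactly \eqref{eqn-resolve2}.

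For the first term I would view the full superposition codebook $\{(A^n_{ijk},C^n_{ij})\}$ as a resolvability code driving the reverse test channel $P_{XY|AC}$. Although these pairs are not mutually independent (pairs within the same outer cloud share $C^n_{ij}$), pairs in distinct outer clouds are independent and every pair has marginal law $P_{AC}^{\otimes n}$. A direct second-moment computation on $\check{P}_{X^nY^n}(x^n,y^n)$ then decomposes the pointwise variance into a ``full-codebook'' term of order $2^{-n(R_o+R_c+R_a)}$ and an ``outer-cloud'' term of order $2^{-n(R_o+R_c)}$; combining these with a typicality-based estimate of $\sum_{x^n,y^n}P_{XY|AC}^{\otimes n}(x^n,y^n|a^n,c^n)^2$ on jointly typical $(a^n,c^n)$ and passing from variance to expected TV via a Cauchy--Schwarz step (as used in \cite{cuff2010coordination,bloch2013strong,SCMLN:2015}) delivers vanishing expected TV under the joint condition $R_o+R_c+R_a>I(XY;AC)$, establishing \eqref{eqn-resolve1}.

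The main technical obstacle will be obtaining the joint condition \eqref{eqn-resolve1} rather than the stronger per-layer condition $R_a>I(XY;A|C)$ that a naive pointwise application of inner soft covering conditioned on each $C^n_{ij}$ would produce. This is exactly where the second-moment decomposition above is essential: the slack $R_o+R_c-I(XY;C)$ in the outer rate absorbs any deficit $I(XY;A|C)-R_a$ in the inner rate, so only the two rate inequalities of the lemma are required. Secondary atypicality issues---outer codewords $c^n_{ij}$ for which $I(XY;A|C=c^n)/n$ deviates from $I(XY;A|C)$, or non-typical codeword pairs $(a^n_{ijk},c^n_{ij})$---are dispatched by restricting the variance computation to strongly typical sequences and controlling the atypical contribution with the standard probability-of-atypicality bounds.
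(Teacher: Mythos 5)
Your proposal is correct, but it follows a genuinely different route from the paper. The paper bounds $\mathbb E_\mathsf{C}\big[\mathbb{D}(\check{P}_{X^nY^n}\|P^{\otimes n}_{XY})\big]$ directly: after Jensen's inequality is applied to the logarithm, the sum over codeword-pair indices $(i',j',k')$ is partitioned into three classes relative to $(i,j,k)$ --- identical indices, same cloud with $k'\neq k$, and distinct clouds --- whose expected contributions produce, respectively, the $2^{-n(R-I(XY;AC))}$ term, the $2^{-n(R_o+R_c-I(XY;C))}$ term, and the benign constant $1$; Pinsker's inequality then converts the divergence bound to total variation. You instead work in total variation throughout, inserting the intermediate measure $\tilde{P}_{X^nY^n}=\mathbb E[\check{P}_{X^nY^n}\,|\,\mathscr{C}]$, handling $\norm*{\tilde{P}-P^{\otimes n}_{XY}}_{\scriptscriptstyle TV}$ by classical single-layer soft covering at rate $R_o+R_c>I(XY;C)$, and handling the fluctuation of $\check{P}$ about its conditional mean by a conditional-variance/Cauchy--Schwarz estimate over the typical set, which yields the joint condition $R>I(XY;AC)$. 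The two partitions are in exact correspondence (your conditional variance is the paper's $k'\neq k$-within-cloud class plus the diagonal; your outer soft-covering step is the paper's distinct-cloud class), and you correctly identify the crux --- that the analysis must produce the joint condition \eqref{eqn-resolve1} rather than the per-layer condition $R_a>I(XY;A|C)$. What your approach buys is modularity (the outer layer reduces to a citable lemma) and a direct TV bound without the detour through KL and Pinsker; what the paper's approach buys is a single self-contained chain of inequalities in which both rate conditions and the atypical-sequence corrections emerge from one computation. One small point of streamlining: if you compute the full (unconditional) second moment of $\check{P}_{X^nY^n}$ and obtain both the $2^{-n(R_o+R_c+R_a)}$ and $2^{-n(R_o+R_c)}$ terms from the law of total variance, the triangle-inequality split through $\tilde{P}$ is redundant (and conversely); either accounting alone suffices, and keeping both merely double-counts the outer-cloud fluctuation in the upper bound, which is harmless but should be presented as one argument, not two.
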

\begin{proof}
{ In the following}, we drop the subscripts from the pmfs for simplicity. 
Let $R\triangleq R_a+R_c+R_o$,  and choose $\epsilon>0$. Consider the argument for $\mathbb E_\mathsf{C}\big[\mathbb{D}(\check{P}_{X^nY^n}||P^{\otimes n}_{XY})\big]$ shown at the top of the following page.

	\begin{figure*}[!t]
	{\fontsize{10pt}{12pt} 
	\begin{equation*}\label{eqn_dbl_y}
	\begin{split} 
	&\mathbb E_\mathsf{C}\big[\mathbb{D}(\check{P}_{X^nY^n}||P^{\otimes n}_{XY})\big]\\
	&=\mathbb E_\mathsf{C}\Bigg[\sum_{x^n,y^n} \Big( \sum_{i,j,k} \dfrac{P(x^n|A^{n}_{ijk},C^{n}_{ij})P(y^n|A^{n}_{ijk},C^{n}_{ij})}{2^{nR}}\Big)
		\log \Bigg( \sum_{i',j',k'} \dfrac{P(x^n|A^{n}_{i'j'k'},C^{n}_{i'j'})P(y^n|A^{n}_{i'j'k'},C^{n}_{i'j'})}{2^{nR}P^{\otimes n}_{XY}(x^n,y^n)}\Bigg)\Bigg] \\
	&\stackrel{(a)}{=}\! \sum_{x^n,y^n} \! \mathbb E_{A^{n}_{ijk}C^{n}_{ij}} \Bigg[\! \Big( \sum_{i,j,k} 		\dfrac{P(x^n|A^{n}_{ijk},C^{n}_{ij})P(y^n|A^{n}_{ijk},C^{n}_{ij})}{2^{nR}}\Big) 
		 \mathbb E_{\mathrm{rest}} \Big[\! \log \Big(\!\! \sum_{i',j',k'} \!\!\dfrac{P(x^n|A^{n}_{i'j'k'},\!C^{n}_{i'j'})P(y^n|A^{n}_{i'j'k'},\!C^{n}_{i'j'})}{2^{nR}P^{\otimes n}_{XY}(x^n,y^n)}\Big)\! \Big| A^{n}_{ijk}C^{n}_{ij}\Big]\!\Bigg] \\ 
	&\stackrel{(b)}{\leq}\! \sum_{x^n,y^n} \! \mathbb E_{A^{n}_{ijk}C^{n}_{ij}}\Bigg[\! \Big( \sum_{i,j,k} \dfrac{P(x^n|A^{n}_{ijk},C^{n}_{ij})P(y^n|A^{n}_{ijk},C^{n}_{ij})}{2^{nR}}\Big)
	\log \Big(\mathbb E_{\mathrm{rest}} \Big[\!\!\sum_{i',j',k'}\! \!\dfrac{P(x^n|A^{n}_{i'j'k'},\!C^{n}_{i'j'})P(y^n|A^{n}_{i'j'k'},\!C^{n}_{i'j'})}{2^{nR}P^{\otimes n}_{XY}(x^n,y^n)}\! \Big| A^{n}_{ijk}C^{n}_{ij}\Big]\! \Big)\!\Bigg] \\	
	& \stackrel{(c)}{=}\! \sum_{x^n,y^n} \sum_{a^{n}_{ijk},c^{n}_{ij}} \sum_{i,j,k} \dfrac{P(x^n,y^n,a^{n}_{ijk},c^{n}_{ij})}{2^{nR}}\log \Bigg( \sum_{\substack{i',j',k':\\(i',j',k')=(i,j,k)}} \mathbb E_{A^{n}_{ijk}C^{n}_{ij}} \Big[ \dfrac{P(x^n|A^{n}_{i'j'k'},C^{n}_{i'j'})P(y^n|A^{n}_{i'j'k'},C^{n}_{i'j'})}{2^{nR}P^{\otimes n}_{XY}(x^n,y^n)}\Big|A^{n}_{ijk}C^{n}_{ij}\Big]\\
	&\hspace{2.5in} +\sum_{\substack{i',j',k':\\(i',j')=(i,j),(k'\neq k)}} \mathbb E_{A^{n}_{ijk}C^{n}_{ij}} \Big[ \dfrac{P(x^n|A^{n}_{i'j'k'},C^{n}_{i'j'})P(y^n|A^{n}_{i'j'k'},C^{n}_{i'j'})}{2^{nR}P^{\otimes n}_{XY}(x^n,y^n)}\Big|A^{n}_{ijk}C^{n}_{ij}\Big]\\ 
	&\hspace{2.5in} +\sum_{\substack{i',j',k'':\\(i',j')\neq(i,j)}} \mathbb E_{A^{n}_{ijk}C^{n}_{ij}} \Big[ \dfrac{P(x^n|A^{n}_{i'j'k'},C^{n}_{i'j'})P(y^n|A^{n}_{i'j'k'},C^{n}_{i'j'})}{2^{nR}P^{\otimes n}_{XY}(x^n,y^n)}\Big|A^{n}_{ijk}C^{n}_{ij}\Big] \Bigg) \\
	& \stackrel{(d)}{=} \sum_{x^n,y^n} \sum_{a^{n}_{ijk},c^{n}_{ij}} \sum_{i,j,k} \dfrac{P(x^n,y^n,a^{n}_{ijk},c^{n}_{ij})}{2^{nR}}\log \Bigg(\dfrac{P(x^n,y^n|a^{n}_{ijk},c^{n}_{ij})}{2^{nR}P^{\otimes n}_{XY}(x^n,y^n)} +\sum_{\substack{i',j',k':\\(i',j')=(i,j),(k'\neq k)}} \dfrac{P(x^n,y^n|c^{n}_{ij})}{2^{nR}P^{\otimes n}_{XY}(x^n,y^n)} \\
	&\hspace{2.6in} +\sum_{\substack{i',j',k':\\(i',j')\neq(i,j)}} \dfrac{P^{\otimes n}_{XY}(x^n,y^n)}{2^{nR}P^{\otimes n}_{XY}(x^n,y^n)} \Bigg) \\
	&\stackrel{(e)}{\leq}\sum_{x^n,y^n,a^{n}_{ijk},c^{n}_{ij}} P(x^n,y^n,a^{n}_{ijk},c^{n}_{ij}) \log \Bigg (\dfrac {P(x^n,y^n|a^{n}_{ijk},c^{n}_{ij})}{2^{nR}P^{\otimes n}_{XY}(x^n,y^n)} +(2^{R_a}) \dfrac{P(x^n,y^n|c^{n}_{ij})}{2^{nR}P^{\otimes n}_{XY}(x^n,y^n)} +1 \Bigg) \\
	&\stackrel{(f)}{\leq} \Bigg[ \sum_{\substack{ x^n,y^n,a^{n}_{ijk},c^{n}_{ij}:\\ (x^n,y^n,a^n,c^n)\in {\cal T}_\epsilon^n( p_{XYAC})} } P(x^n,y^n,a^{n}_{ijk},c^{n}_{ij}) \log \Bigg (\dfrac{2^{-nH(XY|AC)(1-\epsilon)}} {2^{nR}2^{-nH(XY)(1+\epsilon)}} + \dfrac{2^{-nH(XY|C)(1-\epsilon)}}{2^{n(R_o+R_c)}2^{-nH(XY)(1+\epsilon)}} + 1 \Bigg)\Bigg]\\ 
	&\hspace{2.4in} + \mathbb P\big((x^n, y^n,a^{n}_{ijk},c^{n}_{ij}) \notin {\cal T}_\epsilon^n( p_{XYAC}) \big)\log(2\mu_{XY}^{-n}+1)\\
	&\stackrel{(g)}{\leq}\Bigg[ \sum_{\substack{ x^n,y^n,a^{n}_{ijk},c^{n}_{ij}:\\ (x^n,y^n,a^n,c^n)\in {\cal T}_\epsilon^n( p_{XYAC})}} P(x^n,y^n,a^{n}_{ijk},c^{n}_{ij}) \log \Bigg(\dfrac{2^{n(I(XY;AC)+\delta(\epsilon))}}{2^{nR}}+ \dfrac{2^{n(I(XY;C)+\delta(\epsilon))}}{2^{n(R_o+R_c)}}+1 \Bigg) \Bigg]\\ 
	&\hspace{2.4in} + \big(2{\cal |X||Y||A||C|}e^{-n\epsilon^2\mu_{XYAC}}\big)\log(2\mu_{XY}^{-n}+1) 
	\xrightarrow{n\rightarrow\infty} 0.
	\end{split}
	\end{equation*} }
	\hrulefill
	\vspace*{4pt}
\end{figure*}

\noindent In this argument:
\begin{itemize} 
\item[(a)] follows from the law of iterated expectation. Note that we have used $(a^{n}_{ijk},c^{n}_{ij})$ to denote the codewords corresponding to the indices $(i,j,k)$, and $(a^{n}_{i'j'k'},c^{n}_{i'j'})$ to denote the codewords corresponding to the indices $(i',j',k')$, respectively.
\item[(b)] follows from Jensen's inequality.
\item[(c)] follows from dividing the inner summation over the indices $(i',j',k')$ into three subsets based on the indices $(i,j,k)$ from the outer summation.
\item[(d)] follows from taking the expectation within the subsets in (c) such that when
\begin{itemize}
	\item $(i',j')=(i,j),(k'\neq k)$: $a^{n}_{i'j'k'}$ is conditionally independent of $a^{n}_{ijk}$ following the nature of the codebook construction (i.e., i.i.d.~at random);
	\item $(i',j')\neq(i,j)$: both codewords ($a^{n}_{ijk},c^{n}_{ij}$) are independent of $(a^{n}_{i'j'k'},c^{n}_{i'j'})$ regardless of the value of $k$. As a result, the expected value of the induced distribution with respect to the input codebooks is the desired distribution $P^{\otimes n}_{XY}$ \cite{cuff2010coordination}.
\end{itemize} 
\item[(e)] follows from
\begin{itemize}
	\item $(i',j',k')=(i,j,k)$: there is only one pair of codewords $(a^{n}_{ijk},c^{n}_{ij})$;
	\item when $(k'\neq k)$ while $(i',j')=(i,j)$ there are $(2^{nR_a}-1)$ indices in the sum; 
	\item $(i',j')\neq(i,j)$: the number of the indices is at most $2^{nR}.$	
\end{itemize} 
\item[(f)] results from splitting the outer summation: The first summation contains typical sequences and is bounded by using the probabilities of the typical set. The second summation contains the tuple of sequences when the pair of actions sequences $x^n, y^n$ and codewords $c^n,a^n$ are not $\epsilon$-jointly typical (i.e., $(x^n,y^n,a^n,c^n)\notin {\cal T}_\epsilon^n( P_{XYAC})$). This sum is upper bounded following \cite{SCo3TRN:2014} with $\mu_{XY} =\min_{x,y} \big(P_{XY}(x,y)\big)$.
\item[(g)] following the Chernoff bound of the probability that a sequence is not strongly typical \cite{kramer2008TinMUIT} where $\mu_{XYAC} =\min_{x,y,a,c} \big(P_{XYAC}(x,y,a,c)\big)$.\\
\end{itemize}
Consequently, the contribution of typical sequences can be made asymptotically small if
\[
	R_a+R_o+R_c >  I(XY;AC),\quad
	R_o+R_c  >  I(XY;C),
\]
while the second term converges to zero exponentially fast with $n$  \cite{kramer2008TinMUIT}. Finally, by applying Pinsker's inequality we have  
\begin{align} \label{eq:resolvProof}
	\mathbb E_\mathsf{C}\big[||\check{P}_{X^nY^n}&-P^{\otimes n}_{XY}||_{{\scriptscriptstyle TV}}\big] 
	 \leq  \mathbb E_\mathsf{C}\Big[\sqrt{2\mathbb{D}(\check{P}_{X^nY^n}||P^{\otimes n}_{XY})} \;\Big] \notag\\
	& \leq \sqrt{2\mathbb E_\mathsf{C} \big[\mathbb{D}(\check{P}_{X^nY^n}||P^{\otimes n}_{XY})\big]}\mathop{\longrightarrow}^{n\rightarrow \infty} 0.
\end{align}
\end{proof}

\begin{remark}
Given $\epsilon>0$, $R_a$, $R_o$, $R_c$ satisfying \eqref{eqn-resolve1} and \eqref{eqn-resolve2}, it follows from \eqref{eq:resolvProof} that there exist an $n\in\mathbb N$ and a random codebook realization for which the code-induced pmf between the indices and the pair of actions satisfies
\begin{align}
||\check{P}_{X^nY^n}-P^{\otimes n}_{XY}||_{{\scriptscriptstyle TV}}  <\epsilon.
\end{align}
\end{remark}

\subsection{Decodability constraint} \label{sec:DecoCons}
Since the operation at Node $Y$ in Fig.~\ref{fig:StrongCoordinationAllied} involves the decoding of $I$ and thus the codeword $C^n(I,J)$ using $B^n$ and $J$, the induced distribution of the scheme for the allied problem will not match that of \eqref{eqn-codeinduced} unless and until we ensure that the decoding succeeds with high probability as $n\rightarrow \infty$. The following lemma quantifies the necessary rate for this decoding to succeed asymptotically almost always.

\begin{lemma}[Decodability constraint]
\label{Lma:Decod}
Let $\hat I, C^n_{\hat IJ}$ be the output of a typicality-based decoder that uses common randomness $J$ to decode the index $I$ and the sequence $C^n_{{I}J}$ from $B^n$.
If the rate for the index $I$ satisfies $R_c<I(B;C)$ then, { 
\renewcommand{\theenumi}{\roman{enumi}}
\begin{enumerate}
\item $\mathbb E_\mathsf{C}\big[\mathbb P[\hat{I}\neq I]\big]\rightarrow 0$ as $n\rightarrow \infty$, where $\mathbb P[\hat{I}\neq I]$ is the probability that the decoding fails for a realization of the random codebook, and \label{itm:proof2Itm1}
\item $\lim\limits_{n\rightarrow \infty} \mathbb E_\mathsf{C}\big[ \norm{\check{P}_{X^nY^nIJK}-\mathring{P}_{X^nY^nIJK}}_{{\scriptscriptstyle TV}}\big] =0.$ \label{itm:proof2Itm2}
\end{enumerate}}
\end{lemma}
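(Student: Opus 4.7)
The plan is to establish (i) via a standard joint-typicality channel-coding argument applied to the \emph{effective} channel from $C$ to $B$ induced by the two-stage codebook construction, and then derive (ii) as a coupling corollary of (i).

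For (i), I would first identify the effective channel by averaging out $A$: since, by construction, $A^n_{IJK}\sim P_{A|C}^{\otimes n}(\cdot\,|\,C^n_{IJ})$ and $B^n\sim P_{B|A}^{\otimes n}(\cdot\,|\,A^n_{IJK})$, the transmitted pair $(C^n_{IJ},B^n)$ is distributed as $(P_{C}\,P_{B|C})^{\otimes n}$, where $P_{B|C}(b|c)\triangleq\sum_{a}P_{A|C}(a|c)P_{B|A}(b|a)$. For fixed $J$, the sub-codebook $\{C^n_{iJ}\}_{i=1}^{2^{nR_c}}$ consists of i.i.d.\ codewords with symbols drawn from $P_C$. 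The specified typicality-based decoder outputs the unique $\hat I$ with $(C^n_{\hat I J},B^n)\in {\cal T}_\epsilon^n(P_{CB})$ and declares failure otherwise. Two error events must be bounded: (I)~the transmitted pair fails joint typicality, whose probability vanishes in $n$ by the joint typicality lemma; and (II)~some $i'\neq I$ yields a spurious joint-typicality match. Because $\mathscr C$ is generated i.i.d., $C^n_{i'J}$ is independent of $(C^n_{IJ},A^n_{IJK},B^n)$ for every $i'\neq I$, so $\mathbb P[(C^n_{i'J},B^n)\in {\cal T}_\epsilon^n(P_{CB})]\leq 2^{-n(I(C;B)-\delta(\epsilon))}$. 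A union bound over the $2^{nR_c}-1$ wrong indices yields $\mathbb E_{\mathsf C}[\mathbb P[\hat I\neq I]]\leq 2^{-n(I(C;B)-R_c-\delta(\epsilon))}+o(1)$, which vanishes whenever $R_c<I(C;B)$.

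For (ii), I would couple the two pmfs on a single probability space. Given a realization of $\mathsf C$, draw $(I,J,K)$, then $X^n$, $A^n_{IJK}$, $B^n$, and $\hat I$ exactly once using the actual decoder $\mathsf P_{\hat I|B^n J}$. Comparing the definitions in \eqref{eqn-codeinducedAct} and \eqref{eqn-codeinduced}, the laws $\mathring P$ and $\check P$ agree on the marginal of $(X^n,I,J,K)$ given $\mathsf C$; they differ only in how $Y^n$ is produced, since $\mathring P$ applies $P_{Y|BC}^{\otimes n}(\cdot|B^n,C^n_{\hat IJ})$ whereas $\check P$ applies $P_{Y|BC}^{\otimes n}(\cdot|B^n,C^n_{IJ})$. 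On the event $\{\hat I=I\}$ these reverse channels coincide, so the two copies of $Y^n$ can be coupled to be identical; they may disagree only on $\{\hat I\neq I\}$. The coupling inequality for total variation then gives $\|\check P_{X^nY^nIJK}-\mathring P_{X^nY^nIJK}\|_{{\scriptscriptstyle TV}}\leq \mathbb P[\hat I\neq I]$, and taking $\mathbb E_{\mathsf C}$ together with part~(i) closes the argument.

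The main obstacle is the independence claim used in~(i): despite the nested construction of $\mathscr C$ and $\mathscr A$, any alternate $C^n_{i'J}$ must be independent of $(C^n_{IJ},A^n_{IJK},B^n)$ for $i'\neq I$. This follows from the row-wise i.i.d.\ construction of $\mathscr C$ combined with the fact that $B^n$ is produced from $(C^n_{IJ},A^n_{IJK})$ alone through the memoryless channel. Once this independence is in hand, the remaining steps reduce to standard joint-typicality bookkeeping and the coupling inequality, both of which are routine.
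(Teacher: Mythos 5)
Your proof is correct and part (i) follows essentially the same route as the paper: reduce to the effective channel $P_{B|C}$ obtained by averaging out $A$, bound the atypicality of the transmitted pair, and apply a union bound with the packing lemma over the $2^{nR_c}-1$ wrong indices, with the independence of $C^n_{i'J}$ from $(C^n_{IJ},A^n_{IJK},B^n)$ justified exactly as you state. For part (ii) the paper is terse---it merely asserts that $\mathbb E_\mathsf{C}[\mathsf P_{\hat I|B^nJ}]\to\delta_{I\hat I}$ and that the total-variation limit ``follows in a straightforward manner''---whereas your coupling argument, which runs the single sample path $(I,J,K,X^n,A^n,B^n,\hat I)$ once and observes that \eqref{eqn-codeinducedAct} and \eqref{eqn-codeinduced} can differ only on $\{\hat I\neq I\}$ so that $\norm{\check P-\mathring P}_{{\scriptscriptstyle TV}}\leq \mathbb P[\hat I\neq I]$, supplies the precise bound the paper leaves implicit; this is a welcome tightening rather than a deviation.
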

\begin{proof}
We start the proof of \ref{itm:proof2Itm1}{)} by calculating the average probability of error, averaged over all codewords in the codebook and averaged over all random codebook realizations.
 \begin{align}
 \mathbb E_\mathsf{C}\big[\mathbb P[\hat{I}\neq I]\big]&= \sum_{\mathsf{C}} P_{\mathsf C}(\mathsf c) \mathbb P[\hat{I}\neq I] \notag\\
 &= \sum_{\mathsf{C}} P_{\mathsf C}(\mathsf c) \sum_{i,j,k} \frac{1}{2^{nR}}\mathbb P\Big[\hat{I}\neq I \Big| \substack{I=i\\ J=j\\K=k}\Big] \notag\\
 &=\sum_{i,j,k} \frac{1}{2^{nR}} \sum_{\mathsf{C}} P_{\mathsf C}(\mathsf c) \mathbb P\Big[\hat{I}\neq I \Big| \substack{I=i\\ J=j\\K=k}\Big] \notag\\
 &\stackrel{(a)}{=}\mathbb P\Big[\hat{I}\neq I \Big| \substack{I=1\\ J=1\\K=1}\Big] , \label{eq:DecodProof}
\end{align}
where  in (a) we have used the fact that the conditional probability of error is independent of the triple of indices due to the i.i.d.~nature of the codebook construction. Also, due to the random construction and the properties of jointly typical set, we have 
\begin{align*}
\mathbb E_{\mathsf C} [\mathds{1} \big((A_{111}^n,B^n,C^n_{11})\in {\cal T}_\epsilon^n(P_{ABC})\big)] \xrightarrow{n\rightarrow\infty} 1.
\end{align*}

We now continue the proof by constructing the sets for each $j$ and $ b^n\in\mathcal B^n$ that Node $Y$ will use to identify the transmitted index:
	\begin{align*}
	&\hat{S}_{j,b^n,\mathsf c}\triangleq  \{i: (b^n,c^n_{ij}) \in {\cal T}_\epsilon^n(P_{BC})\}.
	\end{align*}
\noindent The set $\hat{S}_{j,b^n,\mathsf c}$ consists of indices $i\in I$ such that for a given common randomness index $J=j$ and channel realization $B^n=b^n$, the sequences $(b^n,c_{ij}^n)$ are jointly-typical. Assuming $(i,j,k)=(1,1,1)$ was realized, and if $\hat{S}_{1,b^n,\mathsf c}=\{1\}$, then the decoding will be successful. The probability of this event is divided into two steps as follows:

\noindent\textbullet\; First, assuming $(i,j,k)=(1,1,1)$ was realized, for successful decoding, $1$ must be an element of $\hat{S}_{J,B^n,\mathsf c}$. The probability of this event can be bounded as follows.
\begin{align*} 
		\mathbb E_\mathsf C \Big[\mathbb P&\Big[I \in \hat{S}_{J,B^n,\mathsf C}\Big|\substack{I=1\\J=1\\{K=1}}\Big]\Big]\\
		&=\sum_{a^n,b^n,c^n} \Big(P_{C}^{\otimes n}(c^n) P_{A|C}^{\otimes n}(a^n|c^n) P_{B|A}^{\otimes n}(b^n|a^n) \\
		&\hspace{1.0in} \times \mathds{1}\big((c^n,b^n)\in {\cal T}_\epsilon^n(P_{BC})\big)\Big)\\
		&\stackrel{}{=}\sum_{b^n,c^n} P_{BC}^{\otimes n}(b^n,c^n)\mathds{1}\big((b^n\!,c^n)\!\in\! {\cal T}_\epsilon^n(P_{BC})\big)\\
		&\stackrel{(a)}{\geq} 1-\delta(\epsilon) \xrightarrow{n\rightarrow\infty} 1,
\end{align*} 
where (a) follows from the properties of jointly typical sets.\\
		
\noindent\textbullet\; Next, assuming again that $(i,j,k)=(1,1,1)$ was realized, for successful decoding no index greater than or equal to $2$ must be an element of $\hat{S}_{J,B^n,\mathsf c}$. The probability of this event can be bounded as follows:
	\begin{align*} 
	&\mathbb E_\mathsf C \mathbb P\Big [\hat{S}_{J,B^n,\mathsf C} \cap \{2,\dots,2^{nR_c}\}\!=\! \emptyset\Big|\substack{I=1\\J=1\\K=1}\Big]\\
		&\qquad=1- \sum_{i' \neq 1} \mathbb{E}_\mathsf C\mathbb P\Big [i' \in \hat{S}_{J,B^n,\mathsf C}\Big|\substack{I=1\\J=1\\K=1}\Big] \\ 
		&\qquad=1- \sum_{i' \neq 1} \mathbb P[(C^n_{i'1},B^n)\in {\cal T}_\epsilon^n(P_{BC})]\\
		&\qquad\stackrel{(a)}{\geq}1- \sum_{i' \neq 1} 2^{-n(I(B;C)-\delta(\epsilon))}\\
		&\qquad=1-(2^{nR_c}-1) 2^{-n(I(B;C)-\delta(\epsilon))}\\
		&\qquad=1- 2^{-n(I(B;C)-R_c-\delta(\epsilon))} + 2^{-nI(B;C)}\\
		&\qquad\stackrel{(b)}{\geq} 1-\delta(\epsilon) \xrightarrow{n\rightarrow\infty} 1,
		\end{align*}
where
		(a) follows from the packing lemma \cite{elGamal2011NITetwork}, and
		(b) results if $R_c <I(B;C)-\delta(\epsilon)$. 
		
Then from \eqref{eq:DecodProof}, the claim in \ref{itm:proof2Itm1}{)} follows as given by		
	\begin{align*}
		&\mathbb E_\mathsf{C}\big[\mathbb P[\hat{I}\neq I]\big] = \mathbb E_\mathsf C\mathbb P\Big[\hat{I}\neq I\Big|\substack{I=1\\J=1\\K=1}\Big] \\
		&\qquad \leq \left(\mathbb E_\mathsf C \mathbb P\Big[I \notin \hat{S}_{J,B^n,\mathsf C}\Big|\substack{I=1\\J=1\\{K=1}}\Big]\right.\\
		&\qquad\qquad\qquad+\left. \mathbb E_\mathsf C \mathbb P\Big [\hat{S}_{J,B^n,\mathsf C} \cap \{2,\dots,2^{nR_c}\}\!\neq\! \emptyset\Big|\substack{I=1\\J=1\\K=1}\Big]\right) \\
		&\qquad\xrightarrow{n\rightarrow\infty} 0
	\end{align*}
Finally, the proof of \ref{itm:proof2Itm2}{)} follows in a straightforward manner. If the previous two conditions are met, then ${\mathbb E_\mathsf{C}[\mathbb P[\hat{I}\neq I]]\rightarrow 0}$ and  $\mathbb E_\mathsf{C} [{P_{\hat{I}|B^nJ}(\hat{i}|b^n,j)]\rightarrow\delta_{I\hat{I}}}$, 
where $\delta_{I\hat{I}}$ denotes the Kronecker delta. Consequently, from \eqref{eqn-codeinducedAct} and \eqref{eqn-codeinduced} \begin{align}\lim\limits_{n\rightarrow \infty}  \mathbb E_\mathsf{C}\big[ \norm{\check{P}_{X^nY^nIJK}-\mathring{P}_{X^nY^nIJK}}_{{\scriptscriptstyle TV}}\big]  =0.\end{align}

\end{proof}

\subsection{Independence constraint}
\label{sec:independence}

We complete modifying the allied structure to mimic the original problem
with a final step. By assumption, we have a natural independence between the
action sequence $X^n$ and the common randomness $J$. As a result, the joint
distribution over $X^n$ and $J$ in the original problem is a product of the
marginal distributions $P^{\otimes n}_{X}$ and $P_J$. To mimic this
behavior in the scheme for the allied problem, in Lemma~\ref{Lma:Secrecy} we artificially enforce independence by ensuring that the mutual information between $X^n$ and $J$ vanishes.

\begin{lemma}[Independence constraint]
	\label{Lma:Secrecy}
	Consider the scheme for the allied problem given in Fig.~\ref{fig:StrongCoordinationAllied}. Both $	I(J;X^n)\rightarrow 0$ and $\mathbb E_\mathsf{C} \big[||\check{P}_{X^nJ}-P^{\otimes n}_{X}P_J||_{{\scriptscriptstyle TV}} \big] \rightarrow 0$ as $n\rightarrow \infty$ 
	if the code rates satisfy 
	\begin{align}
    R_a+R_c &> I(X;AC),\label{eqn-Indep1}\\
    R_c &> I(X;C).\label{eqn-Indep2}
    \end{align}
\end{lemma}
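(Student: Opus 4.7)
The plan is to recognize that, conditioned on $J=j$, the induced distribution $\check P_{X^n|J=j}$ is precisely the output of a superposition channel-resolvability scheme using $2^{nR_c}$ outer codewords $c^n_{ij}$ and $2^{nR_a}$ inner codewords $a^n_{ijk}$ per cloud, passed through the reverse test channel $P^{\otimes n}_{X|AC}$, and then to invoke the calculation of Lemma~\ref{Lma:Reslv} with $Y$ marginalized out and $J$ held fixed.

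First, since the scheme at Node~$X$ does not modify the marginal of $J$, we have $\check P_J = P_J$, and the chain rule for KL divergence yields the two identities
\begin{align*}
\mathbb E_\mathsf{C}\big[\mathbb{D}(\check P_{X^nJ}\|P^{\otimes n}_XP_J)\big] &= \sum_{j}P_J(j)\,\mathbb E_\mathsf{C}\big[\mathbb{D}(\check P_{X^n|J=j}\|P^{\otimes n}_X)\big],\\
\mathbb{D}(\check P_{X^nJ}\|P^{\otimes n}_XP_J) &= I(J;X^n) + \mathbb{D}(\check P_{X^n}\|P^{\otimes n}_X).
\end{align*}
The first identity reduces the task to controlling the conditional divergence in expectation for each $j$, while the second shows that driving $\mathbb E_\mathsf{C}[\mathbb{D}(\check P_{X^nJ}\|P^{\otimes n}_XP_J)]\to 0$ is enough to force both $\mathbb E_\mathsf{C}[I(J;X^n)]\to 0$ (by nonnegativity of the residual divergence) and the claimed total-variation bound via Pinsker's and Jensen's inequalities, exactly as in~\eqref{eq:resolvProof}.

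Second, by the symmetry of the i.i.d.\ codebook construction, $\mathbb E_\mathsf{C}[\mathbb{D}(\check P_{X^n|J=j}\|P^{\otimes n}_X)]$ does not depend on $j$, so it suffices to bound it for $j=1$. The main technical step is to repeat the index-splitting argument of Lemma~\ref{Lma:Reslv} on
\[
\check P_{X^n|J=1}(x^n) = \tfrac{1}{2^{n(R_c+R_a)}}\sum_{i,k}P^{\otimes n}_{X|AC}(x^n|a^n_{i1k},c^n_{i1}),
\]
partitioning the inner summation over $(i',k')$ into the three disjoint cases $(i',k')=(i,k)$, $(i'=i,\,k'\neq k)$, and $i'\neq i$. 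Jensen's inequality pushes the logarithm through $\mathbb E_\mathsf{C}$, after which the three sub-sums evaluate, by independence of codewords across distinct clouds and conditional independence of inner codewords within a cloud, to $P^{\otimes n}_{X|AC}(x^n|a^n_{i1k},c^n_{i1})$, to $(2^{nR_a}-1)P^{\otimes n}_{X|C}(x^n|c^n_{i1})$, and to at most $2^{n(R_c+R_a)}P^{\otimes n}_X(x^n)$, respectively. A strong-typicality bound on the resulting ratios to $P^{\otimes n}_X(x^n)$ then forces the typical contribution to vanish precisely when
\[
R_a+R_c > I(X;AC) \quad\text{and}\quad R_c > I(X;C),
\]
while a Chernoff bound handles the atypical tail in the same way as step~(g) of Lemma~\ref{Lma:Reslv}.

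The principal obstacle is purely bookkeeping: the argument is essentially that of Lemma~\ref{Lma:Reslv} with $Y^n$ marginalized away and $J$ pinned at a fixed value, but the nested codebook structure requires careful accounting of index collisions in the outer index $i$ versus the inner index $k$, and one must verify that restricting the codebook to a single $j$-column does not alter the i.i.d.\ symmetry properties invoked in the earlier calculation. Once this mirrors the earlier proof, the rate conditions \eqref{eqn-Indep1}--\eqref{eqn-Indep2} emerge directly, and both conclusions of the lemma follow from the two chain-rule identities displayed above.
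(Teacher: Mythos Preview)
Your proposal is correct and follows essentially the same approach as the paper, which simply states that the proof of Lemma~\ref{Lma:Secrecy} ``builds on the results of Section~\ref{sec:DecoCons} and the proof of Lemma~\ref{Lma:Reslv}'' and then records the Pinsker/Jensen chain \eqref{eq:indepProof}. Your conditioning on $J=j$ and chain-rule decomposition are a clean way to execute what the paper only hints at, and the resulting index-splitting over $(i',k')$ with $j$ fixed reproduces exactly the computation of Lemma~\ref{Lma:Reslv} with $Y$ marginalized out, yielding \eqref{eqn-Indep1}--\eqref{eqn-Indep2} as claimed.
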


The proof of Lemma~\ref{Lma:Secrecy} builds on the results of
Section~\ref{sec:DecoCons} and the proof of
Lemma~\ref{Lma:Reslv} of Section \ref{sec:ChResCons}, resulting in
\begin{align} \label{eq:indepProof} 
\mathbb E_\mathsf{C} \big[||\check{P}_{X^nJ}&-P^{\otimes n}_{X}P_J||_{\scriptscriptstyle TV}\big]  
\leq \mathbb E_\mathsf{C} \Big[\sqrt{2\mathbb{D}(\check{P}_{X^nJ}||P^{\otimes n}_{X}P_J)}\;\Big] \notag\\
&\leq \sqrt{2\mathbb E_\mathsf{C} \big[\mathbb{D}(\check{P}_{X^nJ}||P^{\otimes n}_{X}P_J)\big]}\mathop{\longrightarrow}^{n\rightarrow \infty} 0.
\end{align}

\begin{remark}
	Given $\epsilon>0$, $R_a$, $R_c$ meeting \eqref{eqn-Indep1} and \eqref{eqn-Indep2}, it follows from \eqref{eq:indepProof} that there exist an $n\in\mathbb N$ and a random codebook realization for which the code-induced pmf between the common randomness $J$ and the actions of Node $X$ satisfies
	\begin{align} 
	||\check{P}_{X^nJ}-P^{\otimes n}_{X}P_J||_{{\scriptscriptstyle TV}}  <\epsilon.
	\end{align}
\end{remark}

In the original problem of Fig.~\ref{fig:P2PCoordination}, the input action sequence $X^n$ and the index $J$ from the common randomness source are available and the $A$- and $C$-codewords are to be selected. Now, to devise a scheme for the strong coordination problem, we proceed as follows. We let Node $X$ choose indices $I$ and $K$ (and, consequently, the $A$- and $C$-codewords) from the realized $X^n$ and $J$ using the conditional distribution $\mathring P_{IK|X^nJ}$. The joint pmf of the actions and the indices is then given by
\begin{align}
\hat{P}_{X^nY^nIJK} \triangleq P_X^{\otimes n} P_J \mathring P_{IK|X^nJ} \mathring P_{Y^n| IJK}. \label{eqn-randomcoordscheme}
\end{align}

\noindent Finally, we can argue that
\begin{align}
\lim_{n\rightarrow \infty} \mathbb E_\mathsf{C} [\norm{\hat{P}_{X^nY^n}- P_{XY}^{\otimes n}}_{{\scriptscriptstyle TV}} ] =0, 
\end{align} 
since the total variation between the marginal pmf $\hat{P}_{X^nY^n}$ and the design pmf $P_{XY}^{\otimes n}$ can be bounded as 
\begin{equation*}
\begin{split} 
&\norm{\hat{P}_{X^nY^n}-P_{XY}^{\otimes n}}_{\scriptscriptstyle TV}\\
&\stackrel{(a)}{\leq} \norm{\hat{P}_{X^nY^n} -\mathring P_{X^nY^n}}_{\scriptscriptstyle TV}+ \norm{\mathring P_{X^nY^n} - \check{P}_{X^nY^n}}_{\scriptscriptstyle TV} \\
& \qquad+ \norm{\check{P}_{X^nY^n} -P_{XY}^{\otimes n}}_{\scriptscriptstyle TV} \notag\\  
&\stackrel{(b)}{\leq} \norm{\hat{P}_{X^nY^nIJK}-\check{P}_{X^nJ}\mathring P_{IKY^n|X^n,J}}_{\scriptscriptstyle TV}\\
& \qquad + \norm{\check{P}_{X^nY^nIJK}-\mathring{P}_{X^nY^nIJK}}_{{\scriptscriptstyle TV}}  
+ \norm{\check{P}_{X^nY^n}\!-\!P_{XY}^{\otimes n}}_{\scriptscriptstyle TV}\notag\\
&\stackrel{(c)}{=}\norm{P_X^{\otimes n} P_J -\check{P}_{X^nJ}}_{\scriptscriptstyle TV}+ \norm{\check{P}_{X^nY^nIJK}-\mathring{P}_{X^nY^nIJK}}_{{\scriptscriptstyle TV}}\\  
& \qquad + \norm{\check{P}_{X^nY^n} -P_{XY}^{\otimes n}}_{\scriptscriptstyle TV} 
\end{split}
\end{equation*}  
\noindent where
(a) follows from the triangle inequality;
(b) follows from \eqref{eqn-codeinducedAct}, \eqref{eqn-codeinduced}, \eqref{eqn-randomcoordscheme} and \cite[Lemma V.1]{Cuff13}; (c) follows from \cite[Lemma V.2]{Cuff13}.
The terms in the RHS of (c) can be made vanishingly small provided the resolvability, decodability, and independence conditions are met. Thus, we are guaranteed that by meeting the five conditions of Lemmas~\ref{Lma:Reslv}-\ref{Lma:Secrecy}, the scheme defined by \eqref{eqn-randomcoordscheme} achieves strong coordination between Nodes $X$ and $Y$ by communicating over the DMC $P_{B|A}$. Note that since the operation at Nodes $X$ and $Y$ amount to an index selection according to $\mathring P_{IK|X^nJ}$, and a generation of $Y^n$ using the DMC $P_{Y|BC}$, both operations are randomized. The last step is to derandomize the operations at Nodes $X$ and $Y$ by viewing the corresponding local randomness as the source of randomness in these operations. This is detailed next.

\subsection{Local randomness rates}
At Node $X$, local randomness is employed to randomize the selection of indices $(I,K)$ by synthesizing the channel $\mathring P_{IK|X^nJ}$ whereas Node $Y$ utilizes its local randomness  to generate the action sequence $Y^n$ by simulating the channel $P_{Y|BC}$. Using the arguments in~\cite{SCMLN:2015}, we can argue that for any given realization of $J$, the minimum rate of local randomness required for the probabilistic selection of indices $(I,K)$ can be derived by quantifying the number of $A$ and $C$ codewords (equivalently the pair of indices $I,K$) jointly typical with $X^n$. Quantifying the list size as in~\cite{SCMLN:2015} yields $\rho_1 \geq R_a+R_c-I(X;AC)$. At Node $Y$,
the necessary local randomness for the generation of the action sequence is bounded by the channel simulation rate of DMC $P_{Y|BC}$~\cite{Steinberg-Verdu-IT-1994}. Thus,  
$\rho_2 \geq H(Y|BC)$.

Moreover, one can always view a part of the common randomness as local randomness, which then allows us to incorporate the rate-transfer arguments given in \cite[Lemma 2]{SCMLN:2015}. Combining the rate-transfer argument with the constraints in Lemmas \ref{Lma:Reslv}-\ref{Lma:Secrecy}, we obtain
following inner bound to the strong coordination capacity region.

\begin{theorem}\label{Thm:JointCRR}
	A tuple $(R_o, \rho_1,\rho_2)$ is achievable 
	for the strong noisy communication setup in
	Fig.~\ref{fig:P2PCoordination} if for some $R_a,R_c,\delta_1,\delta_2\geq0$,
	\begin{subequations}\label{equ:JSRR}
		\begin{align} 
		R_a+R_o+R_c & {\;>\;} I(XY;AC)+\delta_1 +\delta_2, \label{equ:JSRR1}\\
		R_o+R_c &{\;>\;} I(XY;C)+\delta_1 +\delta_2,\label{equ:JSRR2}\\
		R_a+R_c &{\;>\;} I(X;AC),\label{equ:JSRR3}\\
		R_c &{\;>\;} I(X;C),\label{equ:JSRR4}\\
		R_c &<I(B;C),\label{equ:JSRR5}\\
		\rho_1 &{\;>\;} R_a+R_c-I(X;AC)-\delta_1,\label{equ:JSRR6}\\
		\rho_2 &{\;>\;} H(Y|BC)-\delta_2.\label{equ:JSRR7} 
		\end{align}
	\end{subequations}
\end{theorem}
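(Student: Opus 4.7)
The plan is to stitch together Lemmas~\ref{Lma:Reslv}--\ref{Lma:Secrecy} with two further derandomization ingredients: a list-size / channel-synthesis rate count at each node, and the common-to-local randomness \emph{rate-transfer} trick of \cite[Lem.~2]{SCMLN:2015}. First I would fix any nonnegative auxiliary rates $R_a,R_c$ satisfying \eqref{equ:JSRR1}--\eqref{equ:JSRR5} simultaneously; the resolvability and independence conditions are lower bounds on various sums involving $R_a,R_c,R_o$, while decodability upper-bounds $R_c$ by $I(B;C)$, so the region is nonempty whenever $I(B;C)>I(X;C)$ for the chosen auxiliary pmf, and $\delta_1,\delta_2$ provide the required headroom.

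With these rates fixed, Lemmas~\ref{Lma:Reslv}, \ref{Lma:Decod}, and \ref{Lma:Secrecy} each yield vanishing \emph{expected} total variation over the random codebook $\mathsf{C}$. Using Markov's inequality together with a union bound, I would extract a single codebook realization $\mathsf c$ for which $\|\check P_{X^nY^n}-P^{\otimes n}_{XY}\|_{\scriptscriptstyle TV}$, $\|\check P_{X^nY^nIJK}-\mathring P_{X^nY^nIJK}\|_{\scriptscriptstyle TV}$, and $\|\check P_{X^nJ}-P^{\otimes n}_X P_J\|_{\scriptscriptstyle TV}$ are simultaneously $o(1)$. Plugging this codebook into the triangle-inequality chain displayed just after \eqref{eqn-randomcoordscheme} forces $\|\hat P_{X^nY^n}-P^{\otimes n}_{XY}\|_{\scriptscriptstyle TV}\to 0$ for the \emph{inverted} scheme, in which Node~$X$ selects $(I,K)$ from $(X^n,J)$ via $\mathring P_{IK|X^nJ}$ and Node~$Y$ synthesizes $Y^n$ through $P_{Y|BC}$.

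The encoder and decoder are still randomized, so the next step is to quantify the local randomness needed to derandomize them. At Node~$X$, I would invoke the list-size argument of \cite{SCMLN:2015}: for each jointly typical $(X^n,J)$, the number of $(A^n,C^n)$-codeword pairs consistent with it concentrates around $2^{n(R_a+R_c-I(X;AC))}$, so uniform local randomness of rate $\rho_1>R_a+R_c-I(X;AC)$ suffices to index into this list, yielding \eqref{equ:JSRR6} with $\delta_1=0$. At Node~$Y$, the channel $P_{Y|BC}$ can be simulated to within vanishing total variation using $\rho_2>H(Y|BC)$ bits per action by \cite{Steinberg-Verdu-IT-1994}, giving \eqref{equ:JSRR7} with $\delta_2=0$. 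Finally I would invoke \cite[Lem.~2]{SCMLN:2015} to reinterpret $\delta_1+\delta_2$ bits per action of common randomness as private pads at the two nodes: this inflates the resolvability constraints \eqref{equ:JSRR1}--\eqref{equ:JSRR2} by $\delta_1+\delta_2$ while loosening \eqref{equ:JSRR6}--\eqref{equ:JSRR7} by the corresponding $\delta_i$, recovering exactly \eqref{equ:JSRR}.

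The main obstacle I anticipate is the list-size step for $\rho_1$: showing that $\mathring P_{IK|X^nJ}$ is essentially uniform over a list of size $\approx 2^{n(R_a+R_c-I(X;AC))}$ uniformly in typical $(X^n,J)$ requires both the $J$-from-$X^n$ independence delivered by Lemma~\ref{Lma:Secrecy} and a concentration (second-moment) bound on the number of jointly-typical codeword pairs in the two \emph{nested} random codebooks, which must be handled with some care. A smaller subtlety is verifying that the three expected-total-variation bounds from Lemmas~\ref{Lma:Reslv}--\ref{Lma:Secrecy} decay fast enough that a single Markov-plus-union-bound argument yields a common good codebook realization; this is routine but must be spelled out.
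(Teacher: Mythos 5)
Your proposal matches the paper's own proof: it assembles Lemmas~\ref{Lma:Reslv}--\ref{Lma:Secrecy}, extracts a single good codebook from the vanishing expected total variations, inverts the allied scheme via $\mathring P_{IK|X^nJ}$ and the triangle-inequality chain following \eqref{eqn-randomcoordscheme}, derandomizes with the list-size bound $\rho_1>R_a+R_c-I(X;AC)$ and the channel-simulation bound $\rho_2>H(Y|BC)$, and applies the rate-transfer lemma of \cite{SCMLN:2015} to obtain the $\delta_1,\delta_2$ terms. The obstacles you flag (codebook selection via Markov's inequality, and the concentration needed for the list-size step) are exactly the points the paper delegates to its remarks and to \cite{SCMLN:2015}, so your route is the same as the paper's.
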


\section{Separate Coordination-Channel Coding Scheme with Randomness Extraction}\label{sec:SepCoorRE}
As a basis for comparison, we will now introduce a separation-based scheme that involves randomness extraction. We first use a $(2^{nR_c},2^{nR_o},n)$ noiseless coordination code with the codebook $\mathscr{U}$ to generate a message $I$ of rate $R_c$. Such a code exists if and only if the rates $R_o, R_c$ satisfy \cite{cuff2010coordination}
\begin{equation}
R_c+R_o \geq I(XY;U),\;R_c \geq I(X;U).\notag
\end{equation}
This coordination message $I$ is then communicated over the noisy channel using
a rate-$R_a$ channel code over $m$ channel uses with codebook $\mathscr{A}$. Hence, $R_c= \lambda
R_a$, where $\lambda=m/n$. The probability of decoding error can be made vanishingly small if $R_a <I(A;B)$. Then,
from the decoder output $\hat{I}$ and the common randomness message $J$ we
reconstruct the coordination sequence $U^n$ and pass it though a test
channel $P_{Y|U}$ to generate the action sequence at Node $Y$. Note that this
separation scheme is constructed as a special case of the joint coordination-channel scheme of Fig.~\ref{fig:StrongCoordinationAllied} by choosing $C=U$
and $P_{AC} =P_A P_U$.

In the following, we restrict ourselves to additive-noise DMCs,
i.e.,~\begin{equation} \label{equ:additiveNoise}
B^m=A^m(I)+Z^m,
\end{equation} where $Z$ is the noise random variable drawn from
some finite field $\mathcal{Z}$, and ``$+$'' is the native addition operation in the field. To extract randomness, we exploit the additive nature of the channel to recover the realization of the channel noise from the decoded codeword. Thus, at the channel decoder output we obtain \begin{equation} \label{equ:NoiseSource}
\hat{Z}^m=B^m+A^m(\hat{I}),
\end{equation} where $B^m$ is the channel output and $A^m(\hat{I})$ the corresponding decoded channel codeword. We can then utilize a randomness extractor on $\hat{Z}^m$ to
supplement the local randomness available at Node $Y$. The following lemma provides some guarantees with respect to the randomness extraction stage.  

\begin{lemma}
	\label{lem:rand}
	Consider the separation based scheme over a finite-field additive DMC. If $R_a<I(A;B)$ and ${m,n\rightarrow \infty}$ with $\frac{m}{n} = \lambda$, the following hold:
	\renewcommand{\theenumi}{\roman{enumi}} {
	\begin{enumerate}
	 \item ${\mathbb P [ Z^m \neq \hat Z^m ] \rightarrow 0},$ \label{itm:proofItm1}
	 \item ${\frac{1}{m}H(\hat Z^m) \rightarrow H(Z)},$ and \label{itm:proofItm2}
	 \item ${I(\hat Z^m; I \hat I) \rightarrow 0}$. \label{itm:proofItm3}
	\end{enumerate} }
\end{lemma}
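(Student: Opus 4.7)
For the proof I would dispatch the three items in sequence, exploiting the additive structure of the channel, the exponentially decaying error probability $P_e:=\mathbb P[\hat I\neq I]$ afforded by the rate-$R_a$ channel code (since $R_a<I(A;B)$), and the unconditional independence of the noise $Z^m$ from the transmitted index $I$.

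For \ref{itm:proofItm1}, the additivity of the channel gives $Z^m=B^m-A^m(I)$ and $\hat Z^m=B^m-A^m(\hat I)$, so the event $\{\hat I=I\}$ is contained in $\{\hat Z^m=Z^m\}$; hence $\mathbb P[\hat Z^m\neq Z^m]\leq P_e\to 0$ by the channel coding theorem. For \ref{itm:proofItm2}, I would apply Fano's inequality in both directions to obtain $H(\hat Z^m\mid Z^m),\, H(Z^m\mid \hat Z^m)\leq 1+m\mathbb P[\hat Z^m\neq Z^m]\log|\mathcal Z|$; combined with $H(Z^m)=mH(Z)$ and item \ref{itm:proofItm1}, this yields $\tfrac{1}{m}\lvert H(\hat Z^m)-mH(Z)\rvert\to 0$.

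The main obstacle is \ref{itm:proofItm3}. Let $E:=\mathds 1[\hat I\neq I]$; since $E$ is a deterministic function of $(I,\hat I)$, I would split
\[
I(\hat Z^m;I,\hat I) \;=\; I(\hat Z^m;E) + I(\hat Z^m;I,\hat I\mid E).
\]
The first term is upper bounded by $H(E)=h_2(P_e)\to 0$. For the second, I would treat the two realizations of $E$ separately. When $E=1$, the crude bound $I(\hat Z^m;I,\hat I\mid E=1)\leq H(I,\hat I\mid E=1)\leq 2mR_a$ contributes at most $2mR_a\cdot P_e$, which vanishes because $P_e$ decays exponentially in $m$ for $R_a<I(A;B)$. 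When $E=0$, one has $\hat I=I$ and $\hat Z^m=Z^m$, so the conditional mutual information collapses to $I(Z^m;I\mid E=0)$. The delicate step is to show that this residual term is small despite the conditioning: invoking the variational characterization of mutual information as the minimum KL divergence against a product distribution gives $I(Z^m;I\mid E=0)\leq D(P_{Z^mI\mid E=0}\,\|\,P_{Z^m}P_I)$, and a direct computation (using that $Z^m\perp I$ under the unconditional measure) produces $D(P_{Z^mI\mid E=0}\,\|\,P_{Z^m}P_I)=-\log(1-P_e)\to 0$. Summing the three contributions closes the argument.
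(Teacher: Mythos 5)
Your proposal is correct, and items (i) and (ii) follow essentially the same path as the paper: (i) is the observation that $\{\hat I=I\}\subseteq\{\hat Z^m=Z^m\}$ together with the channel coding theorem, and (ii) is Fano's inequality applied to $H(\hat Z^m\mid Z^m)$ (you additionally supply the symmetric lower bound via $H(Z^m\mid\hat Z^m)$, which the paper leaves implicit). For item (iii), however, you take a genuinely different route. The paper never conditions on the error event: it bounds $I(\hat Z^m;I\hat I)\le I(Z^m\hat Z^m;I\hat I)\le I(Z^m\hat Z^m;I)+H(\hat I\mid I)$, uses the unconditional independence $I(Z^m;I)=0$ to reduce the first term to $H(\hat Z^m\mid Z^m)-H(\hat Z^m\mid Z^mI)\le H(\hat Z^m\mid Z^m)$, and then invokes Fano twice. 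You instead decompose along the error indicator $E=\mathds{1}[\hat I\neq I]$, kill the $E=1$ branch with the crude bound $2mR_a\,P_e$ (which needs $P_e=o(1/m)$, the same exponential decay the paper also relies on), and handle the delicate $E=0$ branch by the variational bound $I(Z^m;I\mid E=0)\le \mathbb D(P_{Z^mI\mid E=0}\,\|\,P_{Z^m}P_I)=-\log(1-P_e)$; that identity is exact for a deterministic decoder (where $E$ is a function of $(Z^m,I)$) and remains an upper bound for a randomized one, so the step is sound. The paper's argument is shorter because the independence $Z^m\perp I$ is exploited \emph{before} any conditioning, whereas your approach must explicitly quantify the spurious correlation that conditioning on success can induce --- your $-\log(1-P_e)$ computation is precisely the price of that conditioning, and it is a reusable observation in its own right. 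Both arguments are valid and yield the same conclusion.
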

\begin{proof}
  Let $P_e$ be the probability of decoding error (i.e.,~
  $P_{I_e}=\mathbb P[I\neq \hat{I}]$ and $P_{Z_e}=\mathbb P[Z^m \neq
  \hat{Z}^m]$). 
  We first show the claim in~\ref{itm:proofItm1}{)}. From the channel coding theorem we obtain that
  $P_{I_e} \leq 2^{-n\varepsilon}$. Consequently, from~\eqref{equ:additiveNoise} and~\eqref{equ:NoiseSource} $\mathbb P[Z^m \neq
  \hat{Z}^m]$ will follow directly as $P_{Z_e} \leq
  2^{-m\varepsilon}$.\\
\noindent Then, the claim in~\ref{itm:proofItm2}{)} is shown as follows 
\begin{equation*}
\begin{split}
 H(\hat{Z}^m)& \stackrel{(a)}{\leq} H(Z^m)+ H(\hat{Z}^m|Z^m)\\
			&\stackrel{(b)}{\leq} mH(Z) + h_2(P_{Z_e})+ P_{Z_e}m\log|{\cal Z}|\\
{\textstyle \frac{1}{m}}H(\hat{Z}^m)&\leq H(Z) + {\textstyle \frac{1}{m}} h_2(P_{Z_e})+ P_{Z_e}\log|{\cal Z}|\\
{\textstyle \frac{1}{m}}H(\hat{Z}^m)& \xrightarrow{P_{Z_e}\rightarrow 0} H(Z) \\				
\end{split}
\end{equation*}
where
(a) follows from the chain rule of entropy; (b) follows from Fano's inequality and the fact that $Z^m\sim P_{Z}^{\otimes n}$;\\ 
\noindent Finally, the claim in~\ref{itm:proofItm3}{)} is shown by the following chain of inequalities:
\begin{equation*}\label{equ:SepSchRE}
\begin{split}
I(\hat{Z}^m;I\hat{I}) & \leq I(Z^m\hat{Z}^m;I\hat{I})\\
& \leq I(Z^m \hat{Z}^m;I)+H(\hat{I}|I)\\
& = H(\hat{Z}^m|Z^m)-H(\hat{Z}^m|Z^mI)+ H(\hat{I}|I)\\
& \leq H(\hat{Z}^m|Z^m)+ H(\hat{I}|I)\\
& \stackrel{(a)}{\leq} h_2(P_{Z_e})+ P_{Z_e}m\log|{\cal Z}| + h_2(P_{I_e})+P_{I_e}nR_c\\
& \stackrel{(b)}{\leq}\epsilon 
\end{split}
\end{equation*}
where
(a) follows from Fano's inequality; (b) follows from 
{$P_{I_e} \leq 2^{-n\varepsilon}$}, {$P_{Z_e} \leq 2^{-m\varepsilon}$} and $\epsilon,\varepsilon\rightarrow 0$ as $n,m\rightarrow \infty$ respectively.	
\end{proof}

Now, similar to the joint scheme, we can quantify the local randomness at both nodes, apply the rate transfer lemma \cite[Lemma 2]{SCMLN:2015}, and set $\lambda=1$ to facilitate comparison with the joint scheme from Section \ref{sec:JointScheme}. The following theorem then describes an inner bound to the strong coordination region using the separate-based scheme with randomness extraction. 
\begin{theorem}\label{Thm:SepSchCRR}
	There exists an achievable separation based coordination-channel coding scheme for the strong setup in Fig~\ref{fig:P2PCoordination} such that (\ref{eq:StrngCoorCondtion}) is satisfied for $\delta_1\geq0, \delta_2\geq0$ if
\begin{subequations}\label{equ:SSRR}
		\begin{align}
		R_c+R_o &\geq I(XY;U) +\delta_1 +\delta_2, \label{equ:SSRR1}\\ 
		R_c &\geq I(X;U),\label{equ:SSRR2}\\
		R_c &<I(A;B),\label{equ:SSRR3}\\ 
		\rho_1 &\geq R_c-I(X;U)-\delta_1, \label{equ:SSRR4}\\
		\rho_2 &\geq \max \big(0,H(Y|U)-H(Z)\big)-\delta_2. \label{equ:SSRR5}
		\end{align}
	\end{subequations}	
\end{theorem}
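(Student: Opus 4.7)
The plan is to decompose the scheme into three asymptotically lossless stages and to track the total variation of the induced pmf through each one. Stage~1 is a point-to-point noiseless strong coordination code~\cite{cuff2010coordination}, which converts $(X^n,J)$ into a coordination index $I\in\{1,\dots,2^{nR_c}\}$ and deterministically reconstructs a sequence $U^n(I,J)$ so that the induced pmf on $(X^n,U^n)$ is close in total variation to $P_{XU}^{\otimes n}$. This is exactly the $P_{AC}=P_AP_U$, $C=U$ specialization highlighted in the paragraph preceding the theorem, and it produces the rate conditions \eqref{equ:SSRR1}--\eqref{equ:SSRR2} as well as the Node-$X$ local randomness condition~\eqref{equ:SSRR4} after invoking the rate-transfer step of \cite[Lemma~2]{SCMLN:2015}. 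Stage~2 transmits $I$ over $m=\lambda n$ channel uses via a capacity-achieving code of rate $R_a=R_c/\lambda$; by the channel coding theorem and Fano, $\mathbb P[\hat I\neq I]\to 0$ whenever $R_c<\lambda I(A;B)$, so replacing $I$ with $\hat I$ at Node $Y$ inflates the action-pmf total variation only by a vanishing amount. Setting $\lambda=1$ yields~\eqref{equ:SSRR3}.

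Stage~3 is the delicate part and is where I anticipate the main obstacle. Node $Y$ must synthesize the test channel $P_{Y|U}$ to produce $Y^n$ from $U^n$, which by Steinberg--Verd\'u~\cite{Steinberg-Verdu-IT-1994} requires nearly-uniform local randomness of rate $H(Y|U)$ per symbol. The idea is to harvest part of this randomness from the recovered noise $\hat Z^m=B^m+A^m(\hat I)$. Lemma~\ref{lem:rand} supplies the three ingredients: $\hat Z^m$ matches $Z^m$ with high probability (\ref{itm:proofItm1}), carries asymptotically $nH(Z)$ bits of entropy (\ref{itm:proofItm2}), and is asymptotically independent of $(I,\hat I)$ (\ref{itm:proofItm3}). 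Since $U^n$ is a deterministic function of $(\hat I,J)$ and $(X^n,J)$ was generated independently of the channel noise, item~(\ref{itm:proofItm3}) can be promoted via the Markov chain $(X^n,J,U^n)-(I,\hat I)-\hat Z^m$ to show that $\hat Z^m$ is asymptotically independent of the entire coordination tuple. Applying a standard randomness extractor (or a leftover-hash-type argument) to $\hat Z^m$ then yields $\approx nH(Z)$ bits that are statistically close to uniform and independent of $(X^n,J,U^n,M_1,\hat I)$; these substitute for $H(Z)$ of the $H(Y|U)$ bits per symbol required by the $P_{Y|U}$ synthesizer, leaving only $\max(0,H(Y|U)-H(Z))$ bits per action to be drawn from genuine local randomness $M_2$. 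Rate transfer then gives~\eqref{equ:SSRR5}.

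To close, I would apply a triangle inequality on total variation in the spirit of the bound at the end of Section~\ref{sec:independence}, bounding $\|\hat P_{X^nY^n}-P_{XY}^{\otimes n}\|_{\scriptscriptstyle TV}$ by the sum of the Stage-1 coordination-approximation error, the Stage-2 decoding-error contribution, and the Stage-3 channel-synthesis-approximation error; each of these vanishes whenever the rate conditions of the theorem hold, completing the argument. The hardest step remains the independence argument in Stage~3: $\hat Z^m$ is by construction correlated with $(A^m,B^m,\hat I)$, so care is required to ensure that the post-extractor bits are simultaneously independent of the decoded index, of the coordination sequence $U^n$, and of the Stage-1 Node-$X$ randomness $M_1$, without disturbing either the extractor guarantee or the Stage-1 coordination guarantee.
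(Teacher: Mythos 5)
Your proposal is correct and follows essentially the route the paper intends: the paper omits the proof, stating only that it "follows in a straightforward way from the proofs of both Theorem~\ref{Thm:JointCRR} and Lemma~\ref{lem:rand}," i.e., specialize the joint scheme to $C=U$, $P_{AC}=P_AP_U$, invoke the noiseless coordination and channel-coding conditions, use Lemma~\ref{lem:rand} to credit $H(Z)$ bits per action toward the $P_{Y|U}$ synthesizer, and apply the rate-transfer lemma with $\lambda=1$. Your Stage-3 Markov-chain argument promoting $I(\hat Z^m;I\hat I)\to 0$ to independence from the full coordination tuple is a legitimate and worthwhile filling-in of a step the paper leaves implicit, not a departure from its approach.
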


The proof follows in a straightforward way from the proofs of both
Theorem~\ref{Thm:JointCRR} and Lemma~\ref{lem:rand} and is therefore omitted. 

\section{Example} \label{sec:example}
In the following, we compare the performance of the joint scheme in
Section~\ref{sec:JointScheme} and the separation-based scheme in
Section~\ref{sec:SepCoorRE}  using a simple
example. Specifically, we let $X$ be a Bernoulli-$\frac{1}{2}$ source,
the 
communication channel $P_{B|A}$ be a  binary symmetric channel with crossover probability $p_o$
(BSC($p_o$)), and the conditional distribution $P_{Y|X}$ be a BSC($p$).

\subsection{Basic separation scheme with randomness extraction}	
To derive the rate constraints for the basic separation scheme, we consider  $X-U-Y$ with $U\!\sim \!\mathrm{Bernoulli}-\frac{1}{2}$ (which is known to be optimal \cite{Cuff13}),  $P_{U|X}=\mathrm{BSC}(p_1)$, and $P_{Y|U}=\mathrm{BSC}(p_2)$,  $p_2 \in [0,p]$, $p_1=\dfrac{p-p_2}{1-2p_2}$.
Using this to obtain the mutual information terms in
Theorem~\ref{Thm:SepSchCRR}, we get
\begin{subequations}
	\begin{align}
	&I(X;U)=1-h_2(p_1),\; I(A;B)=1-h_2(p_o),\label{eq:sep_conda}\\
	&I(XY;U)=1+ h_2(p)-h_2(p_1)-h_2(p_2),\\
	&\text{and } H(Y|U)= h_2(p_2).\label{eq:sep_condc}
	\end{align}
\end{subequations}

After a round of Fourier-Motzkin elimination by using \eqref{eq:sep_conda}-\eqref{eq:sep_condc} in Theorem~\ref{Thm:SepSchCRR}, we obtain the following constraints for the achievable region using the separation-based scheme with randomness extraction:
\begin{subequations}\label{eq:SepSchemObjective}
	\begin{align}
	R_o+\rho_1+\rho_2&\geq h_2(p)-\min\big(h_2(p_2), h_2(p_o)\big),\!\label{eq:SepSchemObjective1}\\		
	h_2(p_1)&\ge h_2(p_o) \label{eq:SepSchemObjective2}\\
	R_c&\geq 1-h_2(p_1).\label{eq:SepSchemObjective3}
	\end{align}
\end{subequations} 
Note that \eqref{eq:SepSchemObjective1} presents the achievable sum rate
constraint for the total required randomness in the system.

\subsection{Joint scheme}
The rate constraints for the joint scheme are constructed in two
stages. First, we derive the scheme for the codebook
cardinalities ${|{\cal A}|=2}$ and ${|{\cal C}|=2}$, an extension to larger
$|{\cal C}|$ is straightforward but more tedious (see Figs.~\ref{fig:MinSumRateComparsion} and
\ref{fig:CommunicationRateComparision})\footnote{Note that these
	cardinalities are not optimal. They are, however, analytically feasible and provide a good intuition about the performance of the
	scheme.}. The joint scheme correlates the codebooks while ensuring that the
decodability constraint \eqref{equ:JSRR5} is satisfied. To get the best tradeoff, we find
the joint distribution  $P_{AC}$ that
maximizes $I(B;C)$. For ${|\mathcal{C}|=2}$ this is simply given by
${P_{A|C}(a|c)=\delta_{ac}}$. 
Then, the distribution $P_{X}(x)P_{CA|X}(c,a|x)P_{B|A}(b|a)P_{Y|BC}(y|b,c)$
that produces the boundary of the strong coordination region for the joint
scheme is formed by cascading two BSCs and another symmetric  channel,
yielding the Markov chain ${X-(C,A)-(C,B)-Y}$, with the channel transition matrices
\begin{align}
P_{CA|X}&=\left[ \begin{matrix}
1-p_1 &0  & 0 & p_1 \\
p_1  &0    & 0 & 1-p_1 
\end{matrix}\right],\\
P_{CB|CA}&=\left[\begin{matrix}
1-p_o  &p_o  & 0   & 0\\
0   &0    & p_o & 1-p_o 
\end{matrix}\right], \\
P_{Y|CB}&=\left[\begin{matrix}
1-\alpha  &  1-\beta & \beta & \alpha \\
\alpha  & \beta & 1-\beta & 1-\alpha
\end{matrix}\right]^T
\end{align} 
for some $\alpha,\beta \in [0,1].$\\

Then, the mutual information terms  in
Theorem~\ref{Thm:JointCRR}   can be 
expressed with $p_2\triangleq (1-p_o)\alpha + p_o\beta$ as
\begin{align*}
I(X;AC) &=I(X;C) = 1-h_2(p_1),&\\
I(XY;AC) &=I(XY;C)\\
&=1+ h_2(p)-h_2(p_1)-h_2(p_2),&\\
I(B;C) &= 1-h_2(p_o),\text{ and }\\
H(Y|BC)&= p_oh_2(\beta)\!+\!(1\!-\!p_o)h_2(\alpha).
\end{align*}
To find the minimum achievable sum rate we first perform Fourier-Motzkin
elimination on the rate constraints in Theorem~\ref{Thm:JointCRR} and then
minimize the information terms with respect to the parameters $p_2$,
$\alpha$, and $\beta$ as follows:
\begin{align}
&\!R_o\!+\!\rho_1\!+\!\rho_2\!=\!\!\min_{p_2,\alpha,\beta}\!\!
\big(h_2(p)\!-\!h_2(p_2)\!+\!(1\!-\!p_o)h_2(\alpha)\!+\!p_oh_2(\beta)\big)\notag\\
&\qquad\qquad\quad\,\,\, \text{subject to  }\begin{array}{rcl} h_2(p_1)&>&h_2(p_o),\\R_c&\geq& 1-h_2(p_1),\\p&=&p_1-2p_1p_2+p_2.\end{array}\label{eq:JointSchemObjective}
\end{align}

\subsection{Numerical  results}
 
\begin{figure}[t!]
	\centering 
	\includegraphics[scale=0.4]{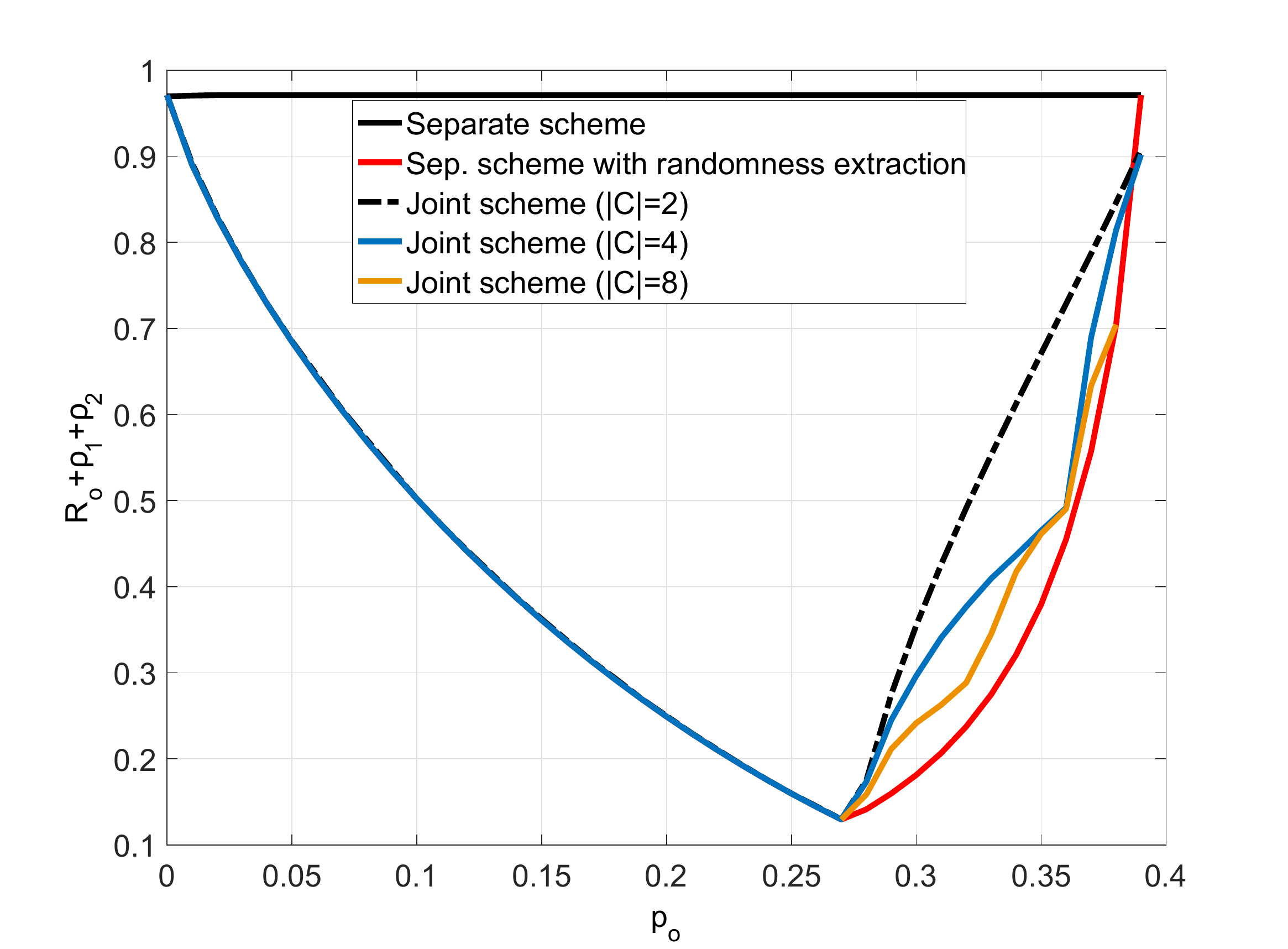}
	\vspace{-2ex}
	\caption{Randomness sum rate vs.~BSC crossover probability $p_0$.}
	\vspace{-2ex}
	\label{fig:MinSumRateComparsion}
\end{figure}

Fig.~\ref{fig:MinSumRateComparsion} presents a comparison between the
minimum randomness sum rate $R_o+\rho_1+\rho_2$ required to achieve
coordination using the joint and the separate scheme with randomness
extraction when the communication channel is given by BSC($p_o$). The target
distribution is set as $p_{Y|X}\!=\!\mathrm{BSC}(0.4)$. The rates for the
joint scheme are obtained by solving the optimization problem in
\eqref{eq:JointSchemObjective}.  Similar results are obtained for the joint
scheme with $|{\cal C}|> 2$. For the separate scheme we choose $p_2$ such
that {$h_2(p_1)=h_2(p_0)$} to maximize the amount of extracted
randomness. We also include the performance of the separate scheme without
randomness extraction. As can be seen from
Fig.~\ref{fig:MinSumRateComparsion}, both the joint scheme and the separate
scheme with randomness extraction provide the same sum rate
$R_o\!+\!\rho_1\!+\!\rho_2$ for $p_o\!\leq\! p'_o$ where
$p'_o\!\triangleq\!\frac{1-\sqrt{1-2p}}{2}$. We also observe that for noisy
channels the joint scheme approaches the performance of the separate scheme
when the cardinality of $C$ is increased. In this regime, we let $p_2=p_0$
such that $h_2(p_2)=h_2(p_0)$ in order to  maximize the amount of extracted
randomness. This is done by selecting $\alpha=0$ and $\beta=1$ associated
with $P_{Y|BC}$. However, it can be easily shown that for $p_0>p_0'$ this
does not ensure a target distribution of $P_{XY}^{\otimes n}$
anymore. Therefore, the optimization over the parameters $\alpha$ and
$\beta$ now results in a larger sum rate $R_o\!+\!\rho_1\!+\!\rho_2$ as can be seen 
from Fig.~\ref{fig:MinSumRateComparsion}. As $p_o$ increases further, the required total 
randomness of the joint scheme approaches the one for the basic separate scheme again.

Fig.~\ref{fig:CommunicationRateComparision} provides a comparison of the
communication rate for both schemes. Note that the joint scheme provides
significantly smaller rates than the separation scheme with randomness
extraction  for $p_o\leq p'_o$, independent
of the cardinality of $|\mathcal{C}|$.  Thus, in this regime joint
coordination-channel coding provides an advantage in terms of communication
cost and outperforms the separation-based scheme for the same amount of
randomness injected into the system. 

\begin{figure}[t!]
	\centering
	\includegraphics[scale=0.4]{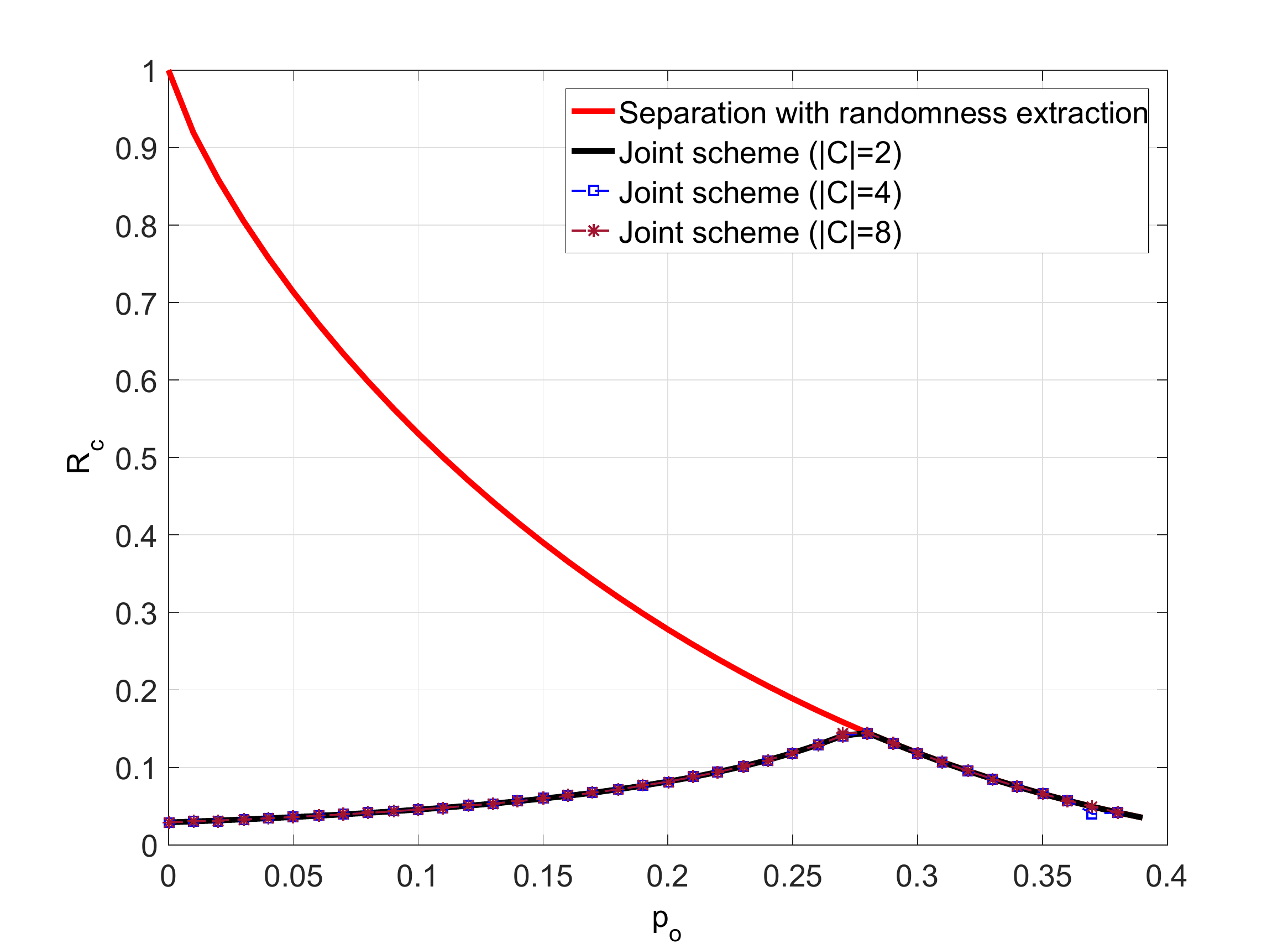}
	\vspace{-2ex}
	\caption{Communication rate vs.~BSC crossover probability $p_0$.}
	\vspace{-2ex}
	\label{fig:CommunicationRateComparision}
\end{figure}

\bibliographystyle{IEEEtran}
\bibliography{IEEEabrv,references}
\end{document}